\newcommand{\subparagraph}{}
\def\equationautorefname~#1\null{\fontshape{n}\textup{\textcolor{black}{Eq. (}#1\textcolor{black}{)}}\null}
\def\theoremautorefname~#1\null{\fontshape{n}\textup{\textcolor{black}{Theorem }#1}\null}
\def\definitionautorefname~#1\null{\fontshape{n}\textup{\textcolor{black}{Definition} #1}\null}
\def\sectionautorefname~#1\null{\fontshape{n}\textup{\textcolor{black}{Section} #1}\null}
\def\subsectionautorefname~#1\null{\fontshape{n}\textup{\textcolor{black}{Section} #1}\null}
\def\lemmaautorefname~#1\null{\fontshape{n}\textup{\textcolor{black}{Lemma} #1}\null}
\let\oldcite\cite
\renewcommand{\cite}[1]{\textup{\oldcite{#1}}}
\newcommand\Autoref[1]{\@first@ref#1,@}
\def\@throw@dot#1.#2@{#1}
\def\@set@refname#1{
    \edef\@tmp{\getrefbykeydefault{#1}{anchor}{}}%
    \xdef\@tmp{\expandafter\@throw@dot\@tmp.@}%
    \ltx@IfUndefined{\@tmp autorefnameplural}%
         {\def\@refname{\@nameuse{\@tmp autorefname}s}}%
         {\def\@refname{\@nameuse{\@tmp autorefnameplural}}}%
}
\def\@first@ref#1,#2{%
  \ifx#2@\autoref{#1}\let\@nextref\@gobble
  \else%
    \@set@refname{#1}
    \@refname~\ref{#1}
    \let\@nextref\@next@ref
  \fi%
  \@nextref#2%
}
\def\@next@ref#1,#2{%
   \ifx#2@ and~\ref{#1}\let\@nextref\@gobble
   \else, \ref{#1}
   \fi%
   \@nextref#2%
}
\newcommand{\ml}{\mathcal{M}_{\lambda}}
\newcommand{\lie}{\mathcal{L}}
\newcommand{\rt}{\overline{T}}
\newcommand{\rtwo}{{}^{(2)}\!R}
\newcommand{\rthree}{{}^{(3)}\!R}
\newcommand{\ko}{\ensuremath{K_{\Lambda_1}}}
\newcommand{\kt}{\ensuremath{K_{\Lambda_2}}}
\newcommand{\lo}{\ensuremath{\Lambda_1}}
\newcommand{\lt}{\ensuremath{\Lambda_2}}
\newcommand{\intmv}{\ensuremath{\int_{M_{\lambda}} dV_{\lambda} \,}}
\newcommand{\intmz}{\ensuremath{\int_{0}^{L(\lambda)} dz \,}}
\newcommand{\parl}{\ensuremath{\partial_{\lambda}}}
\newcommand{\ep}{\ensuremath{\mathcal{E}}}
\theoremstyle{definition}
\titlerunning{Asymptotics of 3+1D Cosmologies with potential $0 < \Lambda_1 \leq U \leq \Lambda_2$}
\begin{document}

\title{ On the asymptotics of 3+1D cosmologies with bounded scalar potential and isometry group forming 2-dimensional orbits}


\author{Jinhui Wang         \and
       Leonardo Senatore 
}


\institute{Jinhui Wang \at
             Department of Physics, Stanford University \\
              \email{wangjh97@stanford.edu}           
           \and
           Leonardo Senatore \at
          Institute for Theoretical Physics, ETH Zurich  \\
            \email{lsenatore@phys.ethz.edu}  
}

\date{Received: date / Accepted: date}

\maketitle

\begin{abstract}
We study the onset of inflation in 3+1 dimensional cosmologies with an inflationary potential $U$ satisfying $0 < \Lambda_1 \leq U \leq \Lambda_2$, matter satisfying the dominant and strong energy conditions, and with spatial slices that can be foliated by 2-dimensional surfaces that are orbits under an isometry group. Assuming an initial Cauchy slice with positive mean curvature everywhere, we show, via mean curvature flow, that there exists a family of spatial slices parameterized by $\lambda$, whose volume grows between the flat slicings in de Sitter spaces with cosmological constants $\Lambda_1$ and $\Lambda_2$. In particular, inflationary expansion indeed occurs in this setting with inhomogeneous initial conditions. Finally, we apply this ``inflationary time coordinate'' $\lambda$ to study asymptotics of the variation in the metric, the average stress-energy tensor, and the dynamics of an inflaton field on a spatial slice.

\keywords{Inflation, non-homogeneous initial conditions, mean curvature flow.}
\end{abstract}

\newpage

\section{Introduction}
Inflation has been a highly successful theory in resolving the horizon, flatness, and magnetic monopole problems in cosmology \cite{guth1981,Linde:1981mu,Albrecht:1982wi,Linde:1983gd}. Furthermore, it has shown remarkable agreement with observational data from the Cosmic Microwave Background (see for example~\cite{Aghanim:2018eyx,Akrami:2018odb,Akrami:2019izv}) and the Large-Scale Structure of the universe (see for example~\cite{DAmico:2019fhj,Ivanov:2019pdj,Colas:2019ret,Abbott:2017wau,Abbott:2018wzc}). Despite these triumphs, the standard analysis of inflation suffers from an unrealistic assumption of homogeneous initial conditions. Concretely, one assumes an initial region of space, on the order of the Hubble length during inflation, that is homogeneously filled with the inflationary scalar field (inflaton) at the top of its potential. Such fine-tuned initial conditions seem improbable, for which this issue is dubbed the ``initial patch problem'' (see for example~\cite{Ijjas:2015hcc}). More fundamentally, spatial homogeneity is widely thought to be a consequence of inflation, and not its cause. Hence, whether inflation can begin under inhomogeneous initial conditions has been an important question under debate.

Recently, significant progress on both numerical and analytic fronts have been made in this direction. On the numerical side, \cite{East:2015ggf,Clough:2016ymm,Clough:2017efm} have found that inflation always begins from a large set of inhomogeneous initial conditions, enabled by developments in numerical relativity that can handle singularities and horizons \cite{Pretorius:2005gq}. {These numerical experiments seem to go a long way in showing that inflation does not generically need an initial homogenous patch.} On the analytic end, \cite{Kleban:2016sqm} has attained partial results in the general case of a positive cosmological constant without extra symmetries, via a combination of Mean Curvature Flow techniques (see for example~\cite{gerhardtbook}) and the Thurston Geometrization Classification (see \cite{besse1987einstein} Theorem 4.35 and \cite{thurston1997three,10.2307/2152760}), under the assumptions of matter satisfying the weak energy condition and all space-time singularities being of the so-called crushing kind. Concretely, \cite{Kleban:2016sqm} showed that for almost all topologies of the spatial slices of a cosmology, the volume of these slices is increasing and that there is always an open neighborhood that expands at least as fast as the flat slicing of de Sitter space. This is suggestive that the volume will expand to infinity, matter will dilute away, and the universe will resemble de Sitter space in arbitrarily large regions of spacetime. These statements were proven in 2+1 dimensions (with the additional assumption that matter satisfies the strong and the dominant energy conditions in \cite{Creminelli:2019pdh} and stronger assumptions on the matter content and initial conditions in \cite{Barrow:2006cw}). In 3+1 dimensions, \cite{creminelli2020sitter} proved these results plus the pointwise convergence of the metric to that of de Sitter space and `physical equivalence' to de Sitter, under the additional assumption that there exists an isometry group forming 2-dimensional orbits on the spatial slices of the cosmology. In particular, the latter convergence is a concrete, affirmative instance of the de Sitter no-hair conjecture (see for example \cite{Gibbons:1977mu,Hawking:1981fz,Wald:1983ky,Kleban:2016sqm}) which states that initially sufficiently expanding cosmologies should asymptote to de Sitter space in the presence of a positive cosmological constant.

Though the above analytic results are already suggestive that the onset of inflation doesn't necessitate an initial homogeneous patch, all of them assumed a positive cosmological constant as a model of the underlying inflationary potential. In this paper, we focus on 3+1 dimensions with the isometry assumptions of \cite{creminelli2020sitter} and an inflationary potential $U$ satisfying $0 < \lo \leq U \leq \lt$, which supersedes the previously-supposed cosmological constant (which can be seen as the limiting case $\lo = \lt = \Lambda$). We show that if $\frac{\lt}{\lo} < \frac{3}{2}$, there is a family of spacelike hypersurfaces $\{ M_{\lambda} \}$, parameterized by a flow time $\lambda$, that reaches arbitrarily late times (as defined by the Cauchy time function of the cosmology) and whose volumes grow between those de Sitter spaces with cosmological constants $\lo$ and $\lt$ in the flat slicing. In other words, inflationary expansion indeed occurs under our more general assumptions, with $\lambda$ as a new time coordinate.

We also present partial results for the variation of the metric on $\ml$, the dilution of matter, and the asymptotic dynamics of an inflaton. These bounds are suggestive (but do not prove) that the metric approaches a spatially homogeneous one on the slices $M_{\lambda}$ for late enough $\lambda$. All-in-all, our results govern the asymptotics of a more realistic model of inflation. 

\section{Geometric Set-up}\label{setup}
We require the general assumptions of \cite{Creminelli:2019pdh} with the cosmological constant replaced by a potential $U$ and the symmetry assumptions of \cite{creminelli2020sitter}. The former endows certain existence results that enable studying the cosmology via mean curvature flow (MCF) while the latter partially reduces the problem to the study of 2-dimensional manifolds whose Ricci scalar is intricately connected to its Euler characteristic.  
\subsection{General Assumptions}\label{sec:generalassump}
The general geometric assumptions are:
\begin{enumerate}[label=(\Alph*)]
    \item (\textit{Initially expanding slice in cosmology})
    There is a ``cosmology'', which is defined as a connected 3 + 1 dimensional spacetime
$M^{(3+1)}$ with a compact Cauchy surface. This implies that the spacetime is topologically
$M^{(3)} \times \mathbb{R}$ where $M^{(3)}$ is a compact 3-manifold, and that it can be foliated by a family of topologically identical Cauchy surfaces $M_t$ \cite{gerochdomain}. We fix one such foliation, i.e. such a time function $t$, with $t \in [t_0, +\infty)$ and $\nabla t$ being a non-zero timelike past-directed vector field, and with associated lapse function $N$: $N^{-2} \coloneqq - \nabla_{\mu}t \nabla^{\mu} t$,
$N > 0$. We consider manifolds that are initially expanding everywhere. That is, there is
an initial slice $M_0$, where $K > 0$ everywhere, with $K$ being the mean curvature with
respect to the future pointing normal to $M_0$. In particular, such a slice exists if one has a
global crushing singularity in the past.
    \item (\textit{EFE})
    $M^{(3+1)}$ satisfies Einstein's field equation
    \begin{equation}\label{eq:einsteinfield}
        R_{\mu \nu} - \frac{1}{2}g_{\mu\nu}R =8\pi G_N T_{\mu \nu} \ ,
    \end{equation}
    where $R_{\mu \nu}$ is the Ricci tensor, $R$ is the scalar curvature, and $T_{\mu \nu}$ is the stress-energy tensor. Next, we also suppose $T_{\mu \nu}$ can be decomposed as
    \begin{equation}
    T_{\mu \nu} = \overline{T}_{\mu \nu} - U g_{\mu \nu} \ ,
    \end{equation}
    where $U$ is a potential satisfying $\Lambda_1 \leq U \leq \Lambda_2$ for constants $\Lambda_1, \Lambda_2$ and $\overline{T}_{\mu \nu}$ is the reduced stress-energy tensor.
    \item (\textit{Energy conditions}) $U \geq \Lambda_1 > 0$ so the potential $U$ is positive. The reduced stress-energy tensor $\rt_{\mu \nu}$ satisfies the Dominant Energy Condition (DEC) and the Strong Energy Condition (SEC). The DEC states that $-\overline{T}^{\mu}_{\;\nu} k^{\nu}$ is a future-directed timelike or null vector for any future-directed timelike vector $k^{\nu}$. The DEC implies the Weak Energy Condition (WEC), $T_{\mu \nu}k^{\mu}k^{\nu} \geq 0$ for all time-like vectors $k^{\mu}$. The SEC, in 3+1D, states $\left(\overline{T}_{\mu \nu} - \frac{1}{2}g_{\mu \nu}T\right)k^{\mu}k^{\nu} \geq 0$ for any future-directed timelike vector $k^{\mu}$.
    \item (\textit{Crushing singularities}) We presume that the only spacetime singularities are of the crushing kind given in Definition \ref{def:crushing} (thus, those with zero spatial volume):
    
    \begin{definition}\label{def:crushing}
     (Definition 1 in \cite{Creminelli:2019pdh}, c.f. Definitions 2.9 --- 2.11 in \cite{eardleytime}). A future crushing function $\tilde t$ is a globally defined function on $M^{(3+1)}$ such that on a globally hyperbolic neighborhood ${\cal{N}}\cap\{\tilde t>c_0\}$, $\tilde t$ is a Cauchy time function with range $c_0<\tilde t<+\infty$ ($c_0 \ge 0$ is a constant), and such that the level sets~$S_c=\{\tilde t=c\}$, with $c>c_0$, have mean curvature $\tilde K<-c$.\footnote{For example in a Schwarzschild-de Sitter spacetime in the standard coordinates, one could take $\tilde t$ to be a function of $r$ for $r$ close to 0, so the level sets $S_c$ would be $r={\rm const}$.}  We shall say that a cosmology has potential singularities only of the crushing kind if there is an open set ${\cal{N}}$ such that, the inverse of the lapse of the $t$ foliation $N^{-1}$ is bounded outside ${\cal{N}}$, and such that ${\cal{N}}$ contains a Cauchy slice and admits a future crushing function $\tilde t$ with respect to which $N^{-1}$ is bounded in $\{\tilde t \leq c\}$ for any given $c$.    
\end{definition}
    
    Physically, $\mathcal{N}$ corresponds to a subset of the interior of black holes and we are imposing the condition that all possible pathologies occur as $\tilde{t} \to + \infty$.
\end{enumerate}
Notably, our assumptions allow for an inflationary potential $0 < \Lambda_1 \leq U \leq \Lambda_2$ associated with the stress-energy tensor of an inflaton field plus the stress-energy tensor of residual matter $\left(T_{matter}\right)_{\mu \nu}$, assuming $\left(T_{matter}\right)_{\mu \nu}$ also satisfies the DEC and SEC. Such a stress energy tensor takes the form:
\begin{equation}\label{eq:inflatonstressenergy}
T_{\mu \nu} = \nabla_{\mu}\phi \nabla_{\nu}\phi - \frac{1}{2}\left(\nabla_{\rho}\phi\nabla^{\rho}\phi\right)g_{\mu \nu} - U(\phi) g_{\mu \nu} + \left(T_{matter}\right)_{\mu \nu} \ .
\end{equation}
Note that this more realistic inflationary stress-energy tensor is precluded by \cite{creminelli2020sitter} which models the inflationary potential by a positive cosmological constant. We will work with a general $T_{\mu \nu}$ before specializing to this particular one in analyzing the dynamics of the inflaton in \autoref{sec:rolling}.

Now, since $\tilde{t}$ is bounded on the compact initial surface $M_0$, we pick large enough $c_1 > c_0$ such that $M_0 \subset \{\tilde{t} \leq c_1\}$. Then, we can redefine a new time function on $M^{(3+1)}$, which we call $t$ from now on,  whose lapse is set to $1$ in the region $\{\tilde{t} \leq c_1\}$.
    
\subsection{Mean Curvature Flow}
The general geometric assumptions above enable us to study $M^{(3+1)}$ via the Mean Curvature Flow (MCF) of co-dimension one spacelike surfaces in a Lorentzian manifold. This is defined as the deformation of a slice as follows: \(y^{\mu}(x, \lambda)\) is, at each \(\lambda,\) a mapping between the initial spatial manifold \(M_{0}\) (parametrized by \(x\)) and the global spacetime, \(M_{0} \times\left[0, \lambda_{0}\right) \rightarrow M^{(3+1)}\). The evolution under \(\lambda\) is given by (see for instance \cite{eckerhuisken}):
\begin{equation}\label{eq:mcfequation}
\frac{d}{d \lambda} y^{\mu}(x, \lambda)=K n^{\mu}\left(y^{\alpha}\right) \ ,
\end{equation}
where \(n^{\mu}\) is the future-oriented vector orthonormal to the surface of constant \(\lambda\). We denote by $\ml$ the geometric image of $y(\cdot, \lambda)$ and $K(x, \lambda)$ the mean curvature at $y(x, \lambda)$ with respect to the future pointing normal to $\ml$. MCF in Lorentzian manifolds is endowed with many regularity properties, which are rare in the context of Riemannian manifolds. Importantly, the flow is globally graphical and the spatial volume of $\ml$ is non-decreasing $\left(\text{see e.g. } \cite{Creminelli:2019pdh,creminelli2020sitter}\right)$. In our case, two additional properties are: a) $\ml$ remains spacelike; in fact the volume form $\sqrt{h}$ of $\ml$ is non-decreasing but it would vanish anywhere where $\ml$ becomes null $\left(\text{see e.g. } \cite{klebansenatore}\right)$ and, b) $K > 0$ everywhere is preserved on $\ml$ (see e.g. \cite{gerhardtbook}, Proposition 2.7.1).

Our assumptions, restricted to the case where the potential $U$ is just a positive cosmological constant $\Lambda > 0$, were used in \cite{Creminelli:2019pdh} to prove the useful statements about the maximum of $K$ and the existence of the MCF for all $\lambda \in [0, \infty)$. As pointed out in pg. 11 of \cite{Creminelli:2019pdh}, modifications of their results lead to essentially the same consequences in the current situation. 

\begin{theorem}\label{thm:maxk}
(Bound on the Maximum of $K$). \cite{Creminelli:2019pdh} Let $\ml$ be hypersurfaces satisfying the MCF equation, in an interval $\left[\lambda_1, \lambda_{2}\right]$, inside the smooth $(3+1)$ dimensional Lorentzian manifold $M^{(3+1)}$ satisfying \autoref{eq:einsteinfield} with $T_{\mu \nu} = T_{\mu \nu}' - \Lambda g_{\mu \nu}$ for some $\Lambda > 0$ and $T_{\mu \nu}'$ fulfilling the SEC. Then, defining $K_m(\lambda) = \sup_{x \in M_0}K(x, \lambda)$, we have for $\lambda_2 \geq \lambda_1 \geq 0$,
\begin{equation}
K_{m}\left(\lambda_{2}\right) \leq K_{\Lambda}\left(1+C_{1}(\lambda_1) e^{-\frac{2}{3} K_{\Lambda}^{2} \left(\lambda_2 - \lambda_{1}\right)}\right) \ , 
\end{equation}
with $K_{\Lambda} \coloneqq 24 \pi G_N \Lambda$ and $C_{1}(\lambda_1)=\max \left( \frac{K_{m}(\lambda_1)}{K_{\Lambda}}-1,0\right)$. So the maximum of $K$, if larger than $K_{\Lambda},$ decays exponentially fast towards $K_{\Lambda}$ with a rate given $K_{\Lambda}$.
\end{theorem}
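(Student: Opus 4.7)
The plan is to reduce the theorem to an ordinary differential inequality for $K_m(\lambda)$ via the standard MCF evolution equation for $K$ together with a Hamilton-style maximum-principle argument, and then integrate. First I would write down the Lorentzian Ecker--Huisken evolution equation for $K$ under the flow $\partial_{\lambda} y^{\mu} = K n^{\mu}$,
\begin{equation*}
\partial_\lambda K \;=\; \Delta_{h(\lambda)} K \;-\; K\,|A|^2 \;-\; K\, R_{\mu\nu}\,n^\mu n^\nu,
\end{equation*}
where $A$ denotes the second fundamental form of $\ml$, $h(\lambda)$ its induced metric, and $n$ the future-timelike unit normal ($n^\mu n_\mu = -1$). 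The minus sign in front of $|A|^2$ is the Lorentzian analogue of the familiar Riemannian $+H|A|^2$; it is exactly this sign that makes the flat slicing of de Sitter, on which $K = K_\Lambda$, $|A|^2 = K_\Lambda^2/3$, and $R_{\mu\nu}n^\mu n^\nu = -K_\Lambda^2/3$, a stationary solution and a sink of the flow.

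Second, I would use Einstein's equation and the decomposition $T_{\mu\nu} = T'_{\mu\nu} - \Lambda g_{\mu\nu}$ to obtain
\begin{equation*}
R_{\mu\nu} n^\mu n^\nu \;=\; 8\pi G_N\!\left[\bigl(T'_{\mu\nu} - \tfrac{1}{2} g_{\mu\nu}T'\bigr)n^\mu n^\nu\right] \;-\; 8\pi G_N\Lambda \;\ge\; -\frac{K_\Lambda^2}{3},
\end{equation*}
the inequality being the SEC on $T'_{\mu\nu}$ together with the identity $K_\Lambda^2 = 24\pi G_N \Lambda$. Combined with the elementary Cauchy--Schwarz bound $|A|^2 \ge K^2/3$ for the trace-$K$ symmetric second fundamental form of a $3$-dimensional slice, this yields, at any spatial maximum $x^\ast_\lambda$ of $K(\cdot,\lambda)$ on $\ml$ (where $\Delta_{h(\lambda)} K \le 0$),
\begin{equation*}
\frac{d K_m}{d\lambda} \;\le\; -\frac{K_m^3}{3} \;+\; \frac{K_m K_\Lambda^2}{3} \;=\; -\frac{K_m}{3}\bigl(K_m^2 - K_\Lambda^2\bigr).
\end{equation*}

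Third, I would integrate this ODI. In the regime $K_m \ge K_\Lambda$, setting $u = K_m/K_\Lambda - 1 \ge 0$ and bounding $(1+u)(2+u) \ge 2$ reduces the inequality to $\partial_\lambda u \le -\tfrac{2}{3}K_\Lambda^2 u$, and Gr\"onwall gives $u(\lambda_2) \le u(\lambda_1)\, e^{-\frac{2}{3}K_\Lambda^2(\lambda_2-\lambda_1)}$, which is the claimed bound with $C_1(\lambda_1) = K_m(\lambda_1)/K_\Lambda - 1$. In the complementary regime $K_m(\lambda_1) < K_\Lambda$, the same ODI forbids $K_m$ from ever crossing $K_\Lambda$ (it is non-increasing whenever it reaches the threshold), consistent with $C_1(\lambda_1) = 0$.

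The main obstacle consists of two technical points that require care. The spatial argmax $x^\ast_\lambda$ varies with $\lambda$, so the step $\dot K_m = \partial_\lambda K(x^\ast_\lambda,\lambda)$ is not literal and must go through Hamilton's trick, i.e.\ the fact that for a smooth family of $C^2$ functions the upper Dini derivative of the envelope coincides with $\partial_\lambda K$ evaluated at any attaining point; this needs $K$ to be spatially $C^2$ on $\ml$, which is standard from short-time regularity of Lorentzian MCF (cf.~\cite{Creminelli:2019pdh}). Separately, the Lorentzian sign of the $|A|^2$ term in the evolution equation is essential and easy to get wrong: the Riemannian sign would turn the desired sink at $K = K_\Lambda$ into a source.
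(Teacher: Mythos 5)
Your proposal is correct and follows essentially the same route as the source this theorem is imported from: the paper itself only cites \cite{Creminelli:2019pdh} for this result, and the proof there is exactly your chain of (i) the Lorentzian Ecker--Huisken evolution equation for $K$, (ii) the SEC plus Einstein's equations giving $R_{\mu\nu}n^\mu n^\nu \geq -K_\Lambda^2/3$ and Cauchy--Schwarz giving $|A|^2 \geq K^2/3$, (iii) Hamilton's trick to obtain $\frac{d K_m}{d\lambda} \leq -\frac{K_m}{3}(K_m^2 - K_\Lambda^2)$, and (iv) integration of this Riccati-type inequality, whose linearization at $K_\Lambda$ correctly reproduces the decay rate $\frac{2}{3}K_\Lambda^2$. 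Your sign discussion is also right: the Raychaudhuri-consistent form $\partial_\lambda K = \Delta K - K(|A|^2 + R_{\mu\nu}n^\mu n^\nu)$ is the one that makes the flat de Sitter slicing a fixed point and an attractor in the conventions of this paper.
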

Applying the above to $\Lambda = \Lambda_2$ and $T_{\mu \nu}' = \overline{T}_{\mu \nu} + (\Lambda_2 - U)g_{\mu \nu}$ which satisfies\footnote{Indeed, for any future-directed timelike vector $k^{\nu}$, 
$$\left(T_{\mu \nu}' - \frac{1}{2}g_{\mu \nu}T'\right)k^{\mu}k^{\nu} = \left(\overline{T}_{\mu \nu} - \frac{1}{2}g_{\mu \nu}\overline{T}\right)k^{\mu}k^{\nu} - (\Lambda_2 -  U) k_{\mu}k^{\mu} \geq 0 \ ,$$
since $\rt_{\mu \nu}$ satisfies the SEC, $\Lambda_2 - U \geq 0$, and $k_{\mu}k^{\mu} < 0$ since $k^{\mu}$ is timelike.
}
the SEC, we obtain for values of $\lambda \geq 0$, at which the flow exists for all $x \in M_0$,
\begin{equation}\label{eq:kupperpointwise}
K_{m}\left(\lambda \right) \leq \kt \left(1+C_{1} e^{-\frac{2}{3} \kt^{2} \lambda}\right) \ , 
\end{equation}
where $C_{1}=\max \left( \frac{K_{m}(0)}{\kt}-1,0\right)$. For convenience, we also rewrite the above as 
\begin{equation}\label{eq:ksquaredupperpointwise}
K_m^2\left(\lambda \right)  \leq  \kt^2 + C_2 \kt^2 e^{-\frac{2}{3}\kt^2 \lambda} \ ,
\end{equation}
where $C_2 \coloneqq C_1(2+C_1)$.

Now, this upper bound on the maximum of $K$ then implies an upper bound on the height function defined by $u(x, \lambda) \coloneqq  t(y(x, \lambda))$ (i.e. $u = t|_{\ml}$ is the Cauchy time function restricted to $\ml$) and the avoidance of crushing singularities.

\begin{theorem}\label{thm:heightbound}
(Height bound). \cite{Creminelli:2019pdh} Let $c>c_{0},$ with $c_{0}$ given in \autoref{def:crushing} and $\lambda_0 > 0$. There exists a constant $C \geq 0$ (depending on $\sup_{\tilde{t} \leq c}N^{-1}$ and \\
 $\sup_{(x, \lambda) \in M_0 \times [0, \lambda_0)}K$) such that the following holds. 

Provided that $M_{\lambda}$ is in $\tilde{t}<c$ for all $\lambda \in\left[0, \lambda_{0}\right)$, we have
$$
u(x, \lambda) \leq \sup _{x \in M_0} u(x, 0)+C \lambda \ ,
$$
for all $x \in M_0$ and $\lambda \leq \lambda_{0}$.
\end{theorem}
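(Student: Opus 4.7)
The plan is a maximum-principle argument on the height function $u(x, \lambda) := t(y(x, \lambda))$. Differentiating \autoref{eq:mcfequation} gives $\partial_\lambda u = K \, n^\mu \nabla_\mu t$. I would first analyze the factor $n^\mu \nabla_\mu t$ by decomposing the future-pointing unit normal $n^\mu$ against the reference foliation by level sets of $t$: writing $n^\mu = \alpha \, n_t^\mu + v^\mu$, with $n_t^\mu := -N \nabla^\mu t$ the future-directed unit normal to $\{t = \text{const}\}$ and $v^\mu$ tangent to those slices, the normalization $n^\mu n_\mu = -1$ forces $\alpha^2 = 1 + |v|^2 \geq 1$. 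Since $v^\mu \nabla_\mu t = 0$, this yields $n^\mu \nabla_\mu t = \alpha \, N^{-1}$, so that $\partial_\lambda u = \alpha \, K \, N^{-1}$ with a tilt factor $\alpha \geq 1$ that a priori need not be bounded from above.

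The key observation is that $\alpha = 1$ precisely where $M_\lambda$ is tangent to a level set of $t$, and this is automatic at any spatial maximum of $u(\cdot, \lambda)$: at such a point the pullback of $dt$ to $T M_\lambda$ vanishes, forcing $\nabla^\mu t$ to be parallel to $n^\mu$. I would exploit this by setting $C := (\sup_{M_0 \times [0,\lambda_0)} K) \cdot (\sup_{\tilde t \leq c} N^{-1}) + \epsilon$ for arbitrary $\epsilon > 0$, defining $F(x,\lambda) := u(x,\lambda) - C\lambda$, and showing, via a maximum principle on the compact set $M_0 \times [0, \lambda_1]$ (for any $\lambda_1 < \lambda_0$), that its maximum is attained at $\lambda = 0$. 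Indeed, an interior or terminal maximum $(x_*, \lambda_*)$ with $\lambda_* > 0$ would combine spatial maximality (giving $\alpha = 1$ at $y(x_*, \lambda_*)$) with one-sided $\lambda$-maximality ($\partial_\lambda F \geq 0$) to yield $K \, N^{-1} \geq C$, contradicting the choice of $C$. The hypothesis $M_\lambda \subset \{\tilde t < c\}$ is used here precisely to invoke the lapse bound from \autoref{def:crushing} at the relevant points.

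The desired estimate $u(x, \lambda) \leq \sup_{x \in M_0} u(x, 0) + C\lambda$ on $M_0 \times [0, \lambda_0)$ then follows at once, with $C$ depending only on the two suprema in the statement (after letting $\epsilon \to 0$ and taking $\lambda_1 \uparrow \lambda_0$). The main technical subtlety I anticipate is bookkeeping at the maximum of $F$: if the argmax is non-smooth then $F$ need not be classically differentiable there, but this is handled in the standard way via upper Dini derivatives or a Hamilton-type ODE comparison, either of which only requires the upper bound on $\partial_\lambda u$ at a spatial argmax established above. No ingredients beyond the $K$-bound already available from \autoref{thm:maxk} and the lapse bound from \autoref{def:crushing} are needed.
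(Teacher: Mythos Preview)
Your argument is correct and is essentially the standard one: at a spatial maximum of $u(\cdot,\lambda)$ the normal $n^\mu$ aligns with $\nabla^\mu t$, so $\partial_\lambda u_{\max}=K N^{-1}$, and the bounds on $K$ and $N^{-1}$ in $\{\tilde t\leq c\}$ give the linear growth. This is precisely the mechanism the paper invokes (it cites \cite{Creminelli:2019pdh} for the proof, and the same idea reappears explicitly at \autoref{eq:umaxderivative}); your $\epsilon$-perturbation device is equivalent to applying Hamilton's trick (\autoref{lemma:hamilton}) to $u_{\max}(\lambda)$ directly.
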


\begin{theorem}\label{thm:avoidsingular}
(Avoidance of crushing singularities). \cite{Creminelli:2019pdh} If the initial surface of the flow has $K \geq 0,$ then $M_{\lambda}$ stays away from $a$ crushing singularity. More precisely, for any $c>c_{0}>0$ such that $M_{0}$ is in $\{\tilde{t}<c\}$, the flow remains in $\{\tilde{t}<c\}$.
\end{theorem}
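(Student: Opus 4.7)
The approach is a barrier argument: the crushing level sets $S_c = \{\tilde t = c\}$ act as upper barriers, and the positivity of $K$ on $\ml$ is incompatible with $\ml$ touching such a barrier from the past. Suppose, for contradiction, that $\ml$ does not remain in $\{\tilde t < c\}$. Since $M_0 \subset \{\tilde t < c\}$ by hypothesis and the MCF is smooth and depends continuously on $\lambda$, the compactness of $M_0$ together with continuity of $\tilde t$ on the globally hyperbolic region $\mathcal{N}$ yields a first time $\lambda^* > 0$ and a point $p \in M_{\lambda^*}$ such that $\tilde t(p) = c$ and $M_\lambda \subset \{\tilde t \leq c\}$ for every $\lambda \in [0,\lambda^*]$. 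In particular $M_{\lambda^*}$ touches $S_c$ from its past side at $p$.

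At $p$, the surfaces $M_{\lambda^*}$ and $S_c$ are tangent smooth spacelike hypersurfaces sharing the same future-directed unit normal. Pick Gaussian-normal coordinates around $p$ in which the metric takes the form $-d\tau^2 + g_{ij}(\tau,x)\,dx^i dx^j$, with $p$ at the origin, and write both surfaces as spacelike graphs $\tau = \psi_{\mathrm{flow}}(x)$ and $\tau = \psi_{\mathrm{crush}}(x)$. The containment $M_{\lambda^*} \subset \{\tilde t \leq c\}$ translates to $\psi_{\mathrm{flow}} \leq \psi_{\mathrm{crush}}$ near $p$ with equality at $p$, so $\psi_{\mathrm{crush}} - \psi_{\mathrm{flow}}$ attains a local minimum of zero at $p$. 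Since the mean curvature of such a graph is a quasilinear elliptic operator of $\psi$, and the zeroth- and first-order jets of the two graphs coincide at $p$, the standard maximum principle yields
\begin{equation*}
K(M_{\lambda^*})(p) \;\leq\; \tilde K(S_c)(p) \;<\; -c \;<\; 0,
\end{equation*}
where the middle inequality is exactly the crushing condition in \autoref{def:crushing}.

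On the other hand, $K \geq 0$ is preserved along the MCF. This follows from the parabolic maximum principle applied to the evolution equation of $K$ under \autoref{eq:mcfequation}, whose zeroth-order term involves $|A|^2$ and $\mathrm{Ric}(n,n)$ with favourable sign once the SEC is used; this is the same mechanism as the preservation of $K > 0$ already invoked in the setup. Hence $K(M_{\lambda^*})(p) \geq 0$, contradicting the displayed inequality. The flow therefore cannot reach $S_c$, and $\ml$ remains in $\{\tilde t < c\}$ for all $\lambda$ at which the flow is defined.

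The main technical obstacle is to rigorously secure the maximum-principle step at $p$. One needs to ensure that $p \in \mathcal{N}$, which is guaranteed by $c > c_0$ and the location of $M_0$; that the first contact time $\lambda^*$ genuinely exists, which relies on compactness of $M_\lambda$, global graphicality of the flow, and smoothness of $\tilde t$ on $\mathcal{N}$; and that the two graph representations share a first-order jet at $p$, which is a direct consequence of $M_{\lambda^*}$ and $S_c$ being tangent spacelike surfaces with a common future-directed normal. Once these are in place, the argument reduces to the classical avoidance principle for mean curvature flow, adapted to the Lorentzian signature via the sign reversal that makes the comparison between a positive-$K$ evolving surface and a very-negative-$K$ barrier immediately contradictory.
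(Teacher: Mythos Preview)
Your barrier argument is correct and is exactly the standard avoidance principle used in the cited reference \cite{Creminelli:2019pdh}; the present paper does not give its own proof but simply quotes that result. One minor point: preserving $K \geq 0$ along the flow does not actually require the SEC, since at a spatial minimum with $K=0$ the reaction term $K\bigl(|K_{ij}|^2+\mathrm{Ric}(n,n)\bigr)$ vanishes identically and the parabolic maximum principle alone suffices.
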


Theorem \ref{thm:avoidsingular} implies that the flow remains smooth while Theorem \ref{thm:heightbound} implies that $\ml$ stays in a compact region for $\lambda \in [0, \lambda_0)$. Combined with the short-term existence result below, they imply that the MCF exists for all $x \in M_0$ and $\lambda \in [0, \infty)$.

\begin{theorem}\label{thm:shortterm existence}
(Short-term existence of MCF). \cite{ecker_1993} Let $M^{(3+1)}$ be a smooth $(3+1)$-dimensional Lorentzian manifold satisfying \autoref{eq:einsteinfield} with $T_{\mu \nu} = T_{\mu \nu}' - \Lambda g_{\mu \nu}$ for some $\Lambda > 0$ and $T_{\mu \nu}'$ satisfying the WEC. Let $M_{0}$ be a compact smooth spacelike hypersurface in $M^{(3+1)}$. Then there exists a unique family $\left\{M_{\lambda}\right\}$ of smooth compact spacelike hypersurfaces satisfying the MCF equation \eqref{eq:mcfequation}, in an interval $\left[0, \lambda_{0}\right)$ for some $\lambda_{0}>0$ and having initial data $M_{0}$. Moreover, if this family stays inside a smooth compact region of $M^{(3+1)}$ then the solution can be extended beyond $\lambda_0$.
\end{theorem}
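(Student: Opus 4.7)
The plan is to reduce the geometric evolution \eqref{eq:mcfequation} to a scalar quasilinear parabolic PDE in a gauge where $\ml$ is written as a graph over $M_0$, invoke standard parabolic theory for short-term existence and uniqueness, and then derive a priori estimates on the second fundamental form to establish the continuation criterion.

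First I would fix a tubular (Gaussian normal) neighborhood of $M_0$ in $M^{(3+1)}$ generated by the future-directed unit timelike normal $n$. For small $\lambda$, each nearby spacelike hypersurface can be written as $y(x,\lambda) = \exp_{x}(u(x,\lambda)\, n(x))$ for a scalar height function $u$ on $M_0$ with $u(\cdot,0)\equiv 0$. The evolution \eqref{eq:mcfequation} then reduces to a scalar equation $\partial_\lambda u = F(x,u,\nabla u, \nabla^2 u)$, where $F$ is (a rescaling of) the mean curvature of the corresponding graph. Because $M_0$ is spacelike, the induced metric on nearby graphs remains Riemannian, so the linearization of $F$ at $u\equiv 0$ is an elliptic operator on the compact manifold $M_0$, making the flow genuinely parabolic. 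Standard quasilinear parabolic theory on compact manifolds (e.g., a contraction mapping in parabolic H\"older spaces, or the inverse function theorem in Sobolev scales) then yields a unique smooth solution on some maximal interval $[0,\lambda_0)$.

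For the extension statement, suppose $\{\ml\}_{\lambda \in [0,\lambda_0)}$ lies in a smooth compact region $\mathcal{K} \subset M^{(3+1)}$, so that the ambient metric, curvature, and all derivatives thereof are uniformly bounded on $\mathcal{K}$. The standard evolution equations for the second fundamental form $A$ under Lorentzian MCF then take the schematic form
\[
(\partial_\lambda - \Delta_{\ml})|A|^2 = -2|\nabla A|^2 + \mathcal{Q}(A) + \mathcal{R}_{\text{amb}}(A) \ ,
\]
where $\mathcal{Q}$ is polynomial in $A$ and $\mathcal{R}_{\text{amb}}$ encodes ambient curvature contractions. Using \eqref{eq:einsteinfield} to rewrite the ambient Ricci terms in terms of $T'$ and $\Lambda$, the WEC on $T'$ together with $\Lambda > 0$ supplies the favorable sign needed to close a maximum-principle estimate for $|A|$ on $[0,\lambda_0)$. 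Interior parabolic Schauder estimates then bootstrap to bounds on all higher derivatives, and the short-term existence argument, reapplied at a sequence of times approaching $\lambda_0$ with uniform existence time, extends the flow past $\lambda_0$.

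The main obstacle I anticipate is precisely the a priori $|A|$ bound in the extension step. Pure uniform bounds on ambient curvature inside $\mathcal{K}$ only deliver a Gr\"onwall-type inequality, which in principle could permit $|A|$ to blow up on $[0,\lambda_0)$. What is required is a genuine differential inequality to which the maximum principle applies on the whole interval, and this is exactly where the sign conditions on the matter enter through the decomposition $T_{\mu\nu} = T'_{\mu\nu} - \Lambda g_{\mu\nu}$ and the WEC on $T'$. Once this sign is carefully tracked through the evolution of $|A|^2$ along the lines of Ecker's argument in \cite{ecker_1993}, the remaining regularity and continuation steps are standard parabolic bootstrapping.
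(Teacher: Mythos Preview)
The paper does not supply its own proof of this statement: Theorem~\ref{thm:shortterm existence} is quoted from \cite{ecker_1993} and used as a black box, the only local work being the verification (in the footnote following the theorem) that $T'_{\mu\nu} = \overline{T}_{\mu\nu} + (\Lambda_1 - U)g_{\mu\nu}$ satisfies the DEC and hence the WEC, so that the cited result applies. There is therefore no in-paper argument to compare your sketch against.

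Your outline is the standard Ecker--Huisken strategy and is broadly correct for how such results are proved, but one point deserves correction. You write that a Gr\"onwall bound on $[0,\lambda_0)$ ``could permit $|A|$ to blow up'', and that the WEC on $T'$ together with $\Lambda>0$ is what supplies the favorable sign in the $|A|^2$-evolution. In fact, on a \emph{finite} interval inside a compact region $\mathcal{K}$ with uniformly bounded ambient curvature, Gr\"onwall already delivers a finite bound up to $\lambda_0$; and, more to the point, the dangerous $+|A|^4$ reaction term that drives finite-time blow-up in Riemannian MCF appears with the \emph{opposite} sign in the Lorentzian setting, purely from the signature of the Gauss and Simons identities for spacelike hypersurfaces. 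That structural $-|A|^4$ is the real reason the continuation argument closes, independent of any energy condition on the matter. The WEC/SEC hypotheses enter the paper's MCF toolkit elsewhere (e.g.\ the control of $K$ in Theorem~\ref{thm:maxk}), not as the crux of the $|A|$-estimate here. Apart from this misattribution of where the good sign comes from, your plan is sound.
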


Note that \autoref{thm:shortterm existence} is indeed applicable with $\Lambda = \Lambda_1$ and $T_{\mu \nu}' = T_{\mu \nu} + (\Lambda_1 - U)g_{\mu \nu}$ which satisfies\footnote{For any future-directed timelike vector $k^{\nu}$, let $w^{\mu} = -\rt^{\mu}_{\;\nu} k^{\nu}$ be causal and future-directed since $\rt_{\mu \nu}$ satisfies the DEC. Now, define $v^{\mu} = -T' {}^{\mu}_{\;\nu} k^{\nu}$ which is future-directed as
$$v^{\mu}k_{\mu} =  w_{\mu}k^{\mu} - (\Lambda_1 - U)k_{\mu}k^{\mu} < 0 \ , $$
where $w_{\mu}k^{\mu} < 0$ since $w^{\mu}, k^{\mu}$ are future directed, $\Lambda_1 - U \leq 0$, and $k_{\mu}k^{\mu} < 0$ since $k^{\mu}$ is timelike. Next, $v^{\mu}$ is also causal since
$$v_{\mu}v^{\mu} = w_{\mu}w^{\mu} - 2(\Lambda_1 -U)w_{\mu}k^{\mu} + (\Lambda_1 - U)^2 k_{\mu}k^{\mu} \leq 0 \ ,$$
where $w_{\mu}w^{\mu} \leq 0$ since $w^{\mu}$ is causal and also, $w_{\mu}k^{\mu} < 0$, $\Lambda_1 - U \leq 0$ and $k_{\mu}k^{\mu} < 0$ as before.} the DEC and hence the WEC.

Finally, applying \autoref{thm:avoidsingular} with $c = c_1 > c_0$ implies $M_{\lambda}$ always stays in $\{\tilde{t} \leq c_1 \}$ such that for purposes of MCF, the lapse can be treated as unity. Ultimately, the global existence result of this section enables us to study $M^{(3+1)}$ via MCF while the upper bound in \autoref{eq:kupperpointwise} will grant us control over geometric quantities.

\subsection{Symmetry Assumptions}\label{sec:symmetryassum}
We also assume that either of the symmetry assumptions of \cite{creminelli2020sitter} holds. These enable the control of geometric quantities by topological properties of orbit surfaces. 

\begin{enumerate}[label=(\Alph*)]
    \item \textit{(Simplified symmetry assumption)}
    There is a Lie group $G$ which acts on $M^{(3)}$ such that the induced action on $M^{(3+1)}$ is by isometries, and such that the orbits under $G$ are closed surfaces. Assume that the orbits of $G$ are two-sided (i.e, with trivial normal bundle).
    \item \textit{(General symmetry assumption)}. There is a Lie group $\tilde{G}$ which acts on some cover $\pi: \tilde{M}^{(3)} \rightarrow M^{(3)},$ such that the induced action on $\tilde{M}^{(4)} \coloneqq \tilde{M}^{(3)} \times \mathbb{R}$ is by isometries. Assume further that the orbits under $\tilde{G}$ are two dimensional complete (i.e. with no edges) surfaces, and that for each such orbit $\tilde{\Sigma},$ its projection $\Sigma\coloneqq\pi(\tilde{\Sigma})$ is a two-sided surface.
\end{enumerate}

Notably, the induced actions of $G$ in Assumption A and $\tilde{G}$ in Assumption B on $M^{(3+1)}$ both map $M_0$ and thus its flow $\ml$ to themselves, thus preserving level sets of $\lambda$. Furthermore, if $\Sigma$ is an orbit under $G$ in Assumption A or the projection of an orbit under $\tilde{G}$ in Assumption B, all intrinsic and extrinsic scalar quantities on $\Sigma$ (such as $K, \rtwo, \rthree, A_{\mu \nu} A^{\mu \nu}, \left|K_{i j}\right|^{2},|\nabla K|^{2},\left|\sigma_{i j}\right|^{2}$ in \autoref{sec:notation}) are constant along it. 

An example of $M^{(3)}$ allowed by our assumptions is sketched in Fig. \ref{example_geom}. We defer more concrete constructions to Section 3 of \cite{creminelli2020sitter}.

\begin{figure}[t]
\centering
\includegraphics[width=0.8\textwidth]{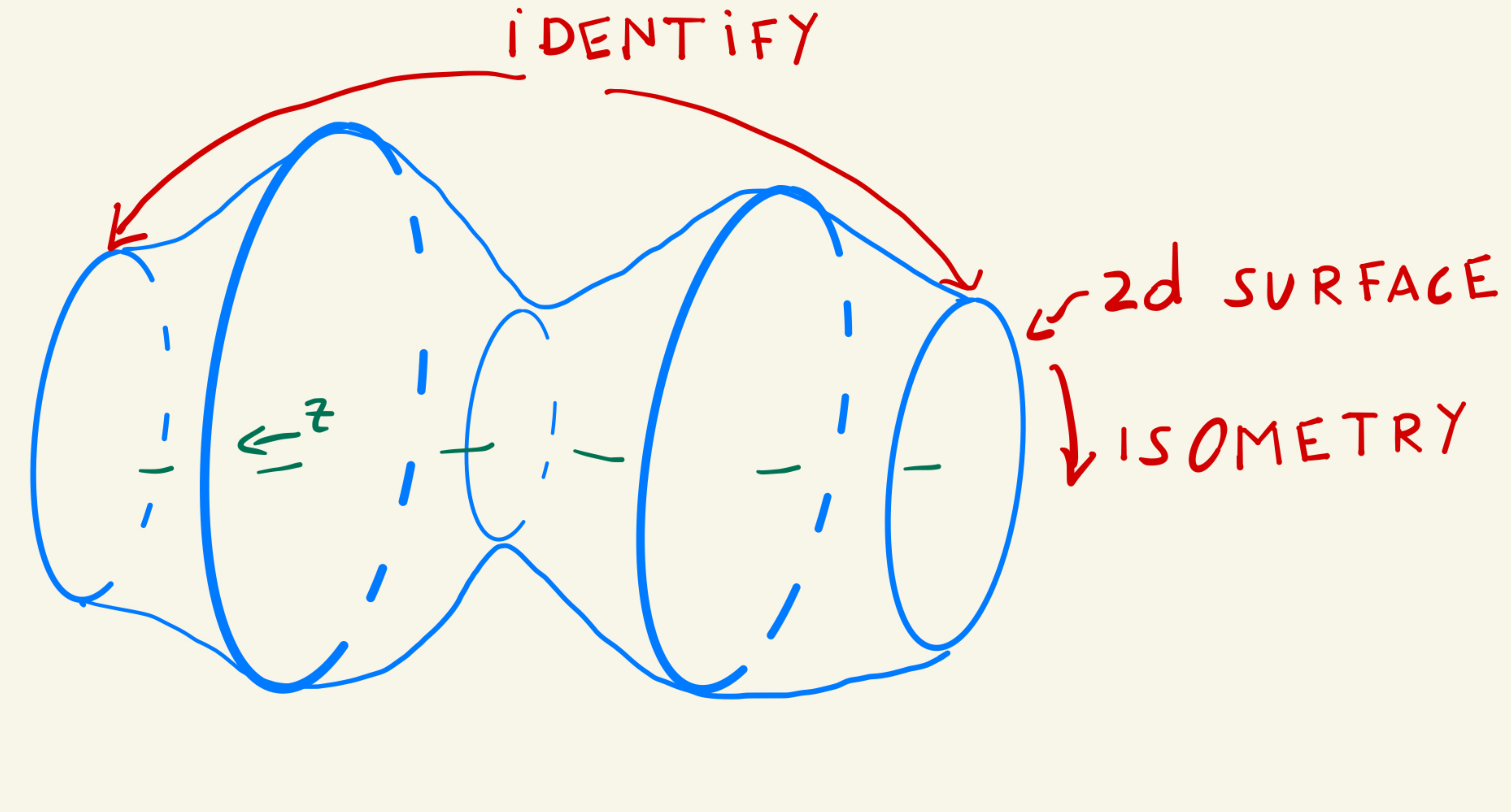}
\caption{Illustration of an example geometry permitted by our assumptions (reproduced with permission from \cite{creminelli2020sitter}). The circles represent (two-dimensional) orbits induced by the isometry group while $z$ is a parameter normal to the foliation by these orbits.}
\label{example_geom}
\end{figure}

Our symmetry assumptions also enable the following decomposition of the induced metric $g_{\mu \nu}$ on $\ml$ for a fixed $\lambda \geq 0$ \cite{creminelli2020sitter}. Consider the foliation of $\ml$ by the orbits of $G$ (or more generally, by the projections of the orbits of $\bar{G}$). By our two-sidedness assumption, there exists a global unit normal vector $E$ to this foliation. Let $z$ be the parameter along the flow lines of $E$, which is thus a signed distance function. Due to the isometries of $G$ (or $\tilde{G}$), the metric on $\mathcal{M}_{\lambda}$ has the warped product form
\begin{equation}\label{eq:spatialmetric}
g=d z^{2}+h_{z} \ ,
\end{equation}
where $h_{z}$ is a two-dimensional metric of constant curvature. 

By passing to a double cover, we can assume without loss of generality that the orbit surfaces $\Sigma$ are orientable. Thus, each such orbit is a two-dimensional orientable surface, with Euler characteristic $\chi =2,0,-2,-4, \ldots$. Later, this will be used to control the Ricci scalar on these orbits via the Gauss-Bonnet theorem.

\section{Notation and statement of main results}\label{sec:notation}
\subsection{Notation}

We will follow the notation in \cite{creminelli2020sitter}. The Riemann tensor is defined by $(\nabla_\mu\nabla_\nu - \nabla_\nu\nabla_\mu) \omega_\rho = R_{\mu\nu\rho}^{\quad\;\sigma} \omega_\sigma$, the Ricci tensor by $R_{\mu\nu} \coloneqq R_{\mu\sigma\nu}^{\quad\;\sigma}$ (we also use the notation $\mathrm{Ric}(a,b)$, with $a,b$ being two vectors), the Ricci scalar {(also known as scalar curvature)} by $R \coloneqq R_\mu^{\;\mu}$.

A time slice ${\cal M}_{\lambda}$ has an induced {Riemannian metric $g_{\mu\nu}$, and we can write  $g^{(4)}_{\mu\nu}=g_{\mu\nu}-n_\mu n_\nu$, where $g^{(4)}_{\mu\nu}$ is the spacetime metric (we use the mostly-plus convention) and  $n^\mu$} is orthonormal to ${\cal M}_\lambda$, $n_{\mu}n^{\mu} = -1$, and future-directed. The extrinsic curvature {(also known as second fundamental form)} of these slices is defined as $K_{\mu\nu} \coloneqq g_\mu^{\;\alpha} \nabla_\alpha n_\nu$, satisfying $n^{\mu} K_{\mu \nu} = 0$ and with trace (also known as mean curvature) $K \coloneqq g^{\mu\nu}K_{\mu\nu} = g^{(4)}{}^{\mu \nu} K_{\mu \nu}$, and traceless part $\sigma_{\mu \nu} \coloneqq K_{\mu \nu} - {1 \over 3} K h_{\mu \nu}$ (with our sign convention $K>0$ corresponds to expansion). We also define $\sigma^2 \coloneqq \sigma_{\mu\nu} \sigma^{\mu\nu}$ where $\sigma^2 \geq 0$ since $\sigma_{\mu\nu}$ is a tensor projected on the spatial hypersurfaces.  The Ricci tensor and Ricci scalar (scalar curvature) associated with the induced metric $g_{\mu\nu}$ on the $3$-dimensional slices are denoted, respectively, by  $\rthree_{\mu\nu}$ and $\rthree$.

Similarly, {each 2-dimensional symmetric orbit (or covering image of a symmetric orbit) $\Sigma$ within ${\cal M}_\lambda$ has induced metric {$h_{\mu\nu}$ satisfying $g_{\mu\nu}=h_{\mu\nu}+t_\mu t_\nu$, where $t^\mu$ is orthogonal to $\Sigma$ and to $n_{\mu}$, and $t_{\mu}t^{\mu} = 1$}. The extrinsic curvature (second fundamental form) of this slice within ${\cal M}_\lambda$  is defined as $A_{\mu\nu} \coloneqq h_\mu^{\;\alpha} \nabla_\alpha t_\nu$, satisfying $t^{\mu}A_{\mu \nu} = 0$ and with trace (mean curvature) $H \coloneqq h^{\mu\nu}A_{\mu\nu}$. The Ricci tensor and Ricci scalar (scalar curvature) associated with the induced metric $h_{\mu\nu}$ on a $2$-dimensional slices are denoted by $\rtwo_{\mu\nu}$ and $\rtwo$ respectively.

We denote by the capital letters  $C_i$ with $i=1,2,3,\ldots$, non-negative constants that depend only on the intrinsic and extrinsic properties of the initial 3-manifold of the flow: $M_{0}$.  We refer to such constants as universal.

\subsection{Main Results}
We summarize the main result of our paper as follows:

\begin{theorem}\label{thm:main_thm_intro}
Let $M^{(3+1)}$ be a spacetime satisfying assumptions $(A)-(D)$ of Section \ref{sec:generalassump}, in addition to  the symmetry assumptions of Section \ref{sec:symmetryassum}. If the orbit surfaces are spheres, one needs to further assume that the minimal area of an orbit surface in $M_0$ satisfies
\begin{equation}
S_{\min} \geq S_{\mathrm{lower}} \ ,
\end{equation}  
where $S_{\mathrm{lower}}$ depends only on $\max_{x\in M_0} K$, $\lo$, $\lt$. If $\frac{\Lambda_2}{\Lambda_1} < \frac{3}{2}$, for every $\delta > 0$, there exists some  $0\leq \tilde{\lambda}<\infty$ such that for $\lambda \geq \tilde{\lambda}$,
\begin{enumerate}[label=(\Alph*)]
\item \label{itm:volume_growth} (Volume grows between de Sitter spaces in the flat slicing with cosmological constants $\lo$ and $\lt$) Letting $V(\lambda)$ denote the volume of $\ml$,
\begin{equation}
(1+\delta)^{-1}e^{\ko^2 \left(\lambda - \tilde{\lambda}\right)} \leq \frac{V(\lambda)}{V(\tilde{\lambda})} \leq (1+\delta)e^{\kt^2 \left(\lambda - \tilde{\lambda}\right)} \ . 
\end{equation}
\item\label{itm:probe_time} (Majority of the physical volume reaches arbitrarily late times). Let $u: M_0 \times [0, \infty) \to \mathbb{R}$ be the height function defined by $u(x, \lambda) = t(y(x, \lambda))$. In other words, $u = t|_{\ml}$. Without loss of generality, also suppose $\min_{x \in M_0}u(x, \tilde{\lambda}) = 0$. For any $C(\lambda) > 0$, denoting $V(\{u \leq C(\lambda)\}, \lambda)$ as the physical volume of the subset of $\ml$ with $u \leq C(\lambda)$, we have
\begin{equation}
V(\{u \leq C(\lambda) \}, \lambda) \leq V(\tilde{\lambda})e^{(1+\delta)C(\lambda) \kt} \ .
\end{equation}
Combined with the lower bound in \ref{itm:volume_growth}, this implies for all $\varepsilon > 0$,
\begin{equation}
\lim_{\lambda \to \infty} \frac{V\left(\left\{u \leq (1 + \varepsilon)^{-1} \frac{\ko^2}{\kt} \lambda \right\}, \lambda \right)}{V(\lambda)} = 0 \ ,
\end{equation}
such that the majority of physical volume has height growing at least linearly in $\lambda$. We remark here that due to \autoref{eq:kupperpointwise}, $\limsup_{\lambda \to \infty} \frac{u(x, \lambda)}{\lambda} \leq \kt$ so the height function also grows at most linearly in $\lambda$.
\end{enumerate}
\end{theorem}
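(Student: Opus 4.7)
The plan is to establish the volume bounds in (A) first, then derive (B) by combining the lower bound from (A) with a flow-trajectory estimate. For the upper bound in (A), the Lorentzian MCF identity $\partial_\lambda \log \sqrt{h} = K^2$ gives $\partial_\lambda V(\lambda) = \intm K^2\,dV_\lambda \leq K_m^2(\lambda)\, V(\lambda)$. Plugging in \autoref{eq:ksquaredupperpointwise} and integrating the logarithmic derivative from $\tilde\lambda$ to $\lambda$ yields
\[
\frac{V(\lambda)}{V(\tilde\lambda)} \leq \exp\!\left(\kt^2(\lambda-\tilde\lambda) + \tfrac{3}{2}C_2\, e^{-\frac{2}{3}\kt^2 \tilde\lambda}\right),
\]
and the second term in the exponent is less than $\log(1+\delta)$ for $\tilde\lambda$ large.

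For the lower bound in (A), I would invoke the Hamiltonian constraint. Using $K_{ij}K^{ij} = \sigma^2 + K^2/3$ and $\rho = T_{\mu\nu}n^\mu n^\nu \geq U \geq \lo$ (the latter by the WEC for $\rt_{\mu\nu}$), it rearranges to $K^2 = \tfrac{3}{2}\sigma^2 + 24\pi G_N \rho - \tfrac{3}{2}\rthree \geq \ko^2 - \tfrac{3}{2}\rthree$. Integrating over $\ml$ gives $\partial_\lambda V \geq \ko^2 V - \tfrac{3}{2} \intm \rthree\,dV_\lambda$, so the task reduces to proving $\intm \rthree\,dV_\lambda = o(V(\lambda))$ as $\lambda \to \infty$. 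For this I would exploit the warped-product structure $g = dz^2 + h_z$: decomposing $\rthree$ into the intrinsic $\rtwo$ on each 2-dimensional orbit plus extrinsic-curvature terms in $z$, the intrinsic piece integrates by Gauss-Bonnet to $4\pi\chi\, L(\lambda)$ with $L(\lambda)$ the $z$-range length, and the extrinsic pieces are controlled via $K^2 \leq \kt^2(1+o(1))$ and the constraints. For non-spherical orbits $\chi \leq 0$ and the intrinsic contribution has the favorable sign; for spheres $\chi=2$ forces one to bound orbit areas from below, which is precisely where the minimal-area hypothesis and the condition $\lt/\lo < 3/2$ enter. Once $\intm \rthree\,dV_\lambda / V(\lambda) \to 0$, Gr\"onwall delivers the lower bound.

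For (B), I would combine two pointwise ODEs along flow trajectories: $\partial_\lambda u(x,\lambda) = K(x,\lambda)$ (the lapse being unity in the MCF region by \autoref{thm:avoidsingular}) and $\partial_\lambda \log \sqrt{h(x,\lambda)} = K^2(x,\lambda)$. On the set $\{x : u(x,\lambda) \leq C(\lambda)\}$, using $K \leq K_m \leq (1+\delta')\kt$ from \autoref{eq:kupperpointwise} for $\tilde\lambda$ large,
\[
\int_{\tilde\lambda}^\lambda K^2(x,s)\,ds \leq (1+\delta')\kt \int_{\tilde\lambda}^\lambda K(x,s)\,ds = (1+\delta')\kt\bigl(u(x,\lambda)-u(x,\tilde\lambda)\bigr) \leq (1+\delta')\kt\, C(\lambda),
\]
where $u(x,\tilde\lambda)\geq 0$ by the normalization. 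Exponentiating the identity $\sqrt{h(x,\lambda)} = \sqrt{h(x,\tilde\lambda)}\exp\!\bigl(\int K^2\,ds\bigr)$ and integrating over this set (whose $\tilde\lambda$-measure is at most $V(\tilde\lambda)$) gives the claimed bound. Dividing by the lower bound in (A) and choosing $\delta$ small enough that $(1+\delta)/(1+\varepsilon) < 1$ makes the ratio decay exponentially in $\lambda$, yielding the stated limit.

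The main obstacle is the estimate $\intm \rthree\,dV_\lambda = o(V(\lambda))$ in the lower-bound argument: since $\rthree$ is not sign-definite, controlling it requires the full warped-product decomposition, Gauss-Bonnet on the 2-orbits, and lower bounds on orbit areas preserved along MCF. This last ingredient is where the sharp hypothesis $\lt/\lo < 3/2$ and (for spherical orbits) the minimal-area condition $S_{\min} \geq S_{\mathrm{lower}}$ become essential; the remaining steps are standard ODE manipulations built on the maximum-principle bound of \autoref{thm:maxk}.
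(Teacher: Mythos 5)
Your overall architecture matches the paper's: the upper bound in (A) from $\tfrac{d}{d\lambda}\log V=\langle K^2\rangle$ plus the maximum-principle bound \eqref{eq:ksquaredupperpointwise}; the lower bound from the Hamiltonian constraint (this is exactly \autoref{eq:gauss3d}) reduced, via the warped-product structure and Gauss--Bonnet, to controlling the orbit areas; and part (B) from the trajectory estimates $\int K\,ds\leq C(\lambda)$ and $\partial_\lambda\log\sqrt{h}=K^2$ combined with the lower bound of (A). Two small imprecisions in (B): $\partial_\lambda u=g^{(4)}(\nabla t,Kn)\geq K$ is an inequality, not an equality (unit lapse gives $|\nabla t|=1$, not $\nabla t\parallel n$), but the inequality points the right way; and the conclusion needs $\delta$ chosen after $\varepsilon$, which you do.

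The genuine gap is the step you label "lower bounds on orbit areas preserved along MCF." This is the technical core of the paper (\autoref{thm:smingrowth}) and it is asserted rather than proved. Three ingredients are missing. First, the mechanism: one computes $\tfrac{d}{d\lambda}\log S_{\min}=\langle\tfrac{2}{3}K^2+Kg^{ij}\sigma_{ji}\rangle$ on the minimal orbit, and at that orbit $H=0$, $H'\geq0$, so the Riccati equation gives the pointwise inequality $\rthree\leq\rtwo$ there (\autoref{eq:minrthreertwo}) and hence $K^2-\tfrac32\sigma^2\geq\ko^2-6\pi\chi_0/S_{\min}$. Your claim that the extrinsic pieces are "controlled via $K^2\leq\kt^2(1+o(1))$" misses this: they are eliminated by sign ($H'\geq0$ at a minimum, or by integrating the total derivative $H'$ over the closed $z$-loop for \autoref{eq:r2r3zdir}), not by bounding them with $K$. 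Second, because $U$ only satisfies $\lo\leq U\leq\lt$ rather than being constant, the lower constraint involves $\ko$ while the upper pointwise bound involves $\kt$; extracting a growth rate for $S_{\min}$ requires minimizing $\tfrac23K^2+Kg^{ij}\sigma_{ji}$ subject to both constraints (the paper's Lemmas \ref{lemma:optimisationraw}--\ref{lemma:optimisation}), and the resulting rate $\tfrac23\kt^2-\sqrt2 K_{cor}^2$ is positive precisely when $\lt/\lo<3/2$ --- this computation is where the hypothesis actually bites, and it is absent from your plan. Third, for $\chi=2$ the error term $\tilde{\ep}_2(6\pi\chi_0/S_{\min})$ depends on $S_{\min}$ itself, so one needs a continuity/bootstrap argument showing $S_{\min}$ never drops below the threshold $\hat S$, which is exactly why the hypothesis $S_{\min}(0)\geq S_{\mathrm{lower}}$ with exponential dependence on initial data appears. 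Relatedly, your reduction to "$\intm\rthree\,dV_\lambda=o(V(\lambda))$" is too weak as stated: to absorb the correction into a fixed factor $(1+\delta)^{-1}$ valid for all $\lambda\geq\tilde\lambda$ (and to get the constant $c$ in $V(\lambda)\geq ce^{\ko^2\lambda}$ used in part (B)), the correction to $\tfrac{d}{d\lambda}\log V$ must be integrable in $\lambda$, i.e.\ exponentially decaying --- which is precisely what the exponential growth of $S_{\min}$ delivers via \autoref{lemma:exponenterror}.
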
 

\autoref{thm:main_thm_intro}\ref{itm:volume_growth} shows that there exists a family of spacelike surfaces $\{M_{\lambda} \}_{\lambda \geq 0}$ with volume exponentially expanding between those of flat slicings of de Sitter spaces with cosmological constants $\Lambda_1$ and $\Lambda_2$. In particular, inflationary expansion occurs in the absence of homogeneous initial conditions in this case, though an everywhere positive mean curvature on the initial slice $M_0$ is still assumed . 

Next, the exponentially expanding volume of $M_{\lambda}$ renders $(x, \lambda) \in M_0 \times [0, \infty)$ as an ideal ``MCF coordinate chart'' for modelling cosmological inflation. To this end, \autoref{thm:main_thm_intro}\ref{itm:probe_time} shows that this coordinate patch indeed covers a large subset of the cosmology and is hence useful as a physical description of the future.

The proof of \autoref{thm:main_thm_intro} (and more) occupies the upcoming two sections. In \autoref{sec:asymptotics_geometrical} we study the asymptotic behavior of the minimal area of orbit surfaces, length of the {transverse} geodesics $\gamma_{\lambda}$ on $\ml$, and the volume of $\ml$ (\autoref{thm:main_thm_intro}\ref{itm:volume_growth} ). \autoref{thm:main_thm_intro}\ref{itm:probe_time} and the asymptotics of the average height $u$ on $M_{\lambda}$ are covered in \autoref{sec:probe}. Finally, \autoref{sec:auxiliary} discusses several applications of the MCF coordinate chart concerning the variation of metric in the z-direction, the $L^1$ dilution of the reduced stress-energy tensor, the asymptotic evolution of an inflaton as well as directions for future work.

\section{Asymptotic Geometric Bounds}\label{sec:asymptotics_geometrical}
In this section, we follow the approach of \cite{creminelli2020sitter} to derive asymptotic bounds for various geometric quantities. The main idea is to repeatedly reduce the dimension of the problem until a two-dimensional orbit surface $\Sigma$, whose $\rtwo$ can be upper bounded in terms its Euler characteristic $\chi$. To this end, the Gauss equation will be essential in relating the Ricci tensor of a submanifold to that of its parent manifold. Applying the traced Gauss equation to $\ml$ within $M^{(3+1)}$, we obtain (see for instance \cite{Wald:1984rg}, Eq. $(\mathrm{E} .2 .27))$
\begin{equation}\label{eq:rawgauss3d}
R+2 \operatorname{Ric}(n, n)= \rthree -K_{\mu \nu} K^{\mu \nu}+K^{2}= \rthree -\sigma^{2}+\frac{2}{3} K^{2} \ . 
\end{equation}
Now, substituting $(n, n)$ into \autoref{eq:einsteinfield} yields
\begin{equation}
Ric(n, n) + \frac{1}{2} R = 8 \pi G_N \left(\rt(n, n) + U \right) \ ,
\end{equation}
such that \autoref{eq:rawgauss3d} becomes 
\begin{equation}
\rthree -\sigma^{2}+\frac{2}{3} K^{2}=16 \pi G_{N}\left(\rt(n, n)+ U\right)=16 \pi G_{N} \rt(n, n)+\frac{2}{3} K_{U}^{2} \ , 
\end{equation}
where 
\begin{equation}
K_U^2 \coloneqq 24\pi G_N U \ .
\end{equation}
In coordinate form:
\begin{equation}\label{eq:fullgauss3d}
\rthree +\frac{2}{3} K^{2}-\sigma^{2}=\frac{2}{3} K_{U}^{2}+16 \pi G_{N} \rt_{\mu \nu} n^{\mu} n^{\nu} \ ,
\end{equation}
where since $\rt_{\mu \nu} n^{\mu} n^{\nu} \geq 0$ by the WEC and using $K_U^2 \geq \ko^2 \coloneqq 24\pi G_N \Lambda_1 $, 
\begin{equation}\label{eq:gauss3d}
\rthree +\frac{2}{3} K^{2}-\sigma^{2} \geq \frac{2}{3} K_U^2 \geq \frac{2}{3}\ko^2  \ .
\end{equation}
Meanwhile, the traced Gauss equations for a surface orbit $\Sigma$ within $\ml$ (see \cite{creminelli2020sitter} Eq. (39)) imply
\begin{equation}\label{eq:gauss2d}
\rthree = \rtwo +2 \rthree_{z z}+A_{\mu \nu} A^{\mu \nu}-H^{2} \ . 
\end{equation}
In reducing the problem to lower dimensions, Eqns. \eqref{eq:fullgauss3d} and \eqref{eq:gauss2d} relate Ricci scalars on $M^{(3+1)}$ and $\ml$ as well as between $\ml$ and $\Sigma$ respectively.
\subsection{Orbit surfaces}
The main objective of this section is an upper bound for $\rtwo$ on any orbit surface $\Sigma$. Denote $S(z, \lambda)$ as the area of a two-dimensional orbit $\Sigma$ at coordinate $z$, defined in \autoref{eq:spatialmetric}, and $S_{min}(\lambda) \coloneqq \inf_{z} S(z, \lambda)$ as the minimal surface area of an orbit on $\ml$. From the Gauss-Bonnet formula, $\rtwo(z, \lambda) S(z, \lambda) = 4 \pi \chi$ which implies
\begin{equation}\label{eq:r2gaussbonnet}
\rtwo(z, \lambda) \leq \frac{4 \pi \chi_0}{S_{min}(\lambda)} \ ,
\end{equation}
where $\chi_0 \coloneqq \max(\chi, 0)$. Now, we show that if $\frac{\lt}{\lo} < \frac{3}{2}$,  $S_{\min}(\lambda)$ grows between the two-dimensional spatial slices of de Sitter spaces in the {flat slicing} with  cosmological constants $\Lambda_1, \Lambda_2$, up to a correction that approaches zero as $\frac{\Lambda_2}{\Lambda_1} \to 1$ (\footnote{The time evolution of topological surfaces in a similar setup was also studied in~\cite{Mirbabayi:2018suw}.}). This exponential growth of $S_{min}(\lambda)$ will imply that the upper bound for $\rtwo$ in \autoref{eq:r2gaussbonnet} decays exponentially if $\chi_0$ is positive. 

To prove the above claim, we will consider a differential equation for $S_{min}(\lambda)$. This is facilitated by the below standard lemma which shows that a minimizer has a well-defined derivative almost everywhere and obeys the fundamental theorem of calculus.

\begin{lemma}\label{lemma:hamilton}
(Hamilton's trick (c.f \cite{mantegazza} Lemma 2.1.3)). Let $f: K \times[a, b] \rightarrow \mathbb{R}$ be a smooth function with $K$ being compact, and set $g$ to be the minimizer of $f$ on $K$:
\begin{equation}
g(t)=\min _{x \in K} f(x, t)\ .
\end{equation}
Then $g$ is a Lipschitz function, and thus, differentiable almost everywhere and obeying the fundamental theorem of calculus. Moreover, if $t_{0}$ is a point of differentiability of $g$, and if $x_{0}$ is such that $f\left(x_{0}, t_{0}\right)=g\left(t_{0}\right)$ then
\begin{equation}
g'\left(t_{0}\right)=\left.\frac{\partial f}{\partial t}\right|_{\left(x_{0}, t_{0}\right)} \ .
\end{equation}
\end{lemma}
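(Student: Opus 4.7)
The plan is to prove the statement in three steps: establish Lipschitz continuity of $g$, invoke a standard real-analysis result for almost-everywhere differentiability and the fundamental theorem of calculus, and finally derive the derivative identity at a point of differentiability by a squeezing argument.

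First, I would prove Lipschitz continuity directly. Since $K \times [a,b]$ is compact and $f$ is smooth, $M := \sup_{K \times [a,b]} |\partial_t f|$ is finite. Given any $t_1, t_2 \in [a,b]$, compactness of $K$ and continuity of $f(\cdot, t_i)$ guarantee the existence of minimizers $x_i$ with $g(t_i) = f(x_i, t_i)$. Then the one-sided estimate
\begin{equation}
g(t_1) \leq f(x_2, t_1) \leq f(x_2, t_2) + M |t_1 - t_2| = g(t_2) + M |t_1 - t_2|
\end{equation}
follows from the mean value theorem applied to $s \mapsto f(x_2, s)$, and by symmetry $|g(t_1) - g(t_2)| \leq M |t_1 - t_2|$.

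Second, I would appeal to the classical fact that any Lipschitz function on a compact interval is absolutely continuous, hence differentiable almost everywhere and satisfying the fundamental theorem of calculus. This is Rademacher's theorem in one dimension and requires no reproof here.

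Third, for the derivative formula, fix a point $t_0 \in (a,b)$ at which $g$ is differentiable and pick any $x_0 \in K$ with $f(x_0, t_0) = g(t_0)$. Define $h(t) := f(x_0, t) - g(t)$. By definition of $g$ as a minimum, $h(t) \geq 0$ for all $t$, while $h(t_0) = 0$. So $h$ attains a local (indeed global) minimum at $t_0$; since $f(x_0, \cdot)$ is smooth and $g$ is differentiable at $t_0$, $h$ is differentiable at $t_0$ with $h'(t_0) = 0$, which rearranges to $g'(t_0) = \partial_t f|_{(x_0, t_0)}$. The main (and only mild) obstacle is the Lipschitz estimate, which requires being careful about which minimizer is used on each side of the inequality; the remainder is essentially a soft argument.
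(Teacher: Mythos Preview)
Your proof is correct and follows the standard argument for Hamilton's trick. Note that the paper does not actually prove this lemma: it is stated as a citation to Mantegazza's book (Lemma 2.1.3) and used as a black box, so there is no paper-proof to compare against. Your three-step approach (Lipschitz via the bound on $\partial_t f$ and swapping minimizers, Rademacher/absolute continuity for a.e.\ differentiability and the fundamental theorem of calculus, and the squeezing argument $h(t)=f(x_0,t)-g(t)\geq 0$ with equality at $t_0$ forcing $h'(t_0)=0$) is precisely the textbook proof.
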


Next, we derive a few useful lemmas for our estimates. \autoref{lemma:exponenterror} enables us to wrap up various bounds once we have shown the logarithmic derivative of a quantity is a constant up to an exponentially decaying correction.
\begin{lemma}\label{lemma:exponenterror}
Let $f(\lambda)$ be a non-negative function that is differentiable almost everywhere and satisfies the fundamental theorem of calculus. Furthermore, suppose that at points of differentiability, there are constants $C, \varepsilon$ with $\varepsilon > 0$ such that
\begin{equation}\label{eq:updiffinequality}
\frac{d}{d\lambda} \log f = C + Err(\lambda)\ ,
\end{equation}
where $0 \leq Err(\lambda)  = O\left(e^{-\varepsilon \lambda}\right)$. Then for every $\delta > 0$, there exists large enough $\lambda_0$ such that for any $\lambda_2 \geq \lambda_1 \geq \lambda_0$, 
\begin{equation}\label{eq:upinequality}
\frac{f(\lambda_2)}{f(\lambda_1)} \leq (1 + \delta) e^{C(\lambda_2 -  \lambda_1)} \ . 
\end{equation}
Similarly, if
\begin{equation}
\frac{d}{d\lambda} \log f = C - Err(\lambda) \ ,
\end{equation}
where $0 \leq Err(\lambda)  = O\left(e^{-\varepsilon \lambda}\right)$, then for every $\delta > 0$, there exists large enough $\lambda_0$ such that for any $\lambda_2 \geq \lambda_1 \geq \lambda_0$, 
\begin{equation}
\frac{f(\lambda_2)}{f(\lambda_1)} \geq (1 + \delta)^{-1} e^{C(\lambda_2 -  \lambda_1)} \ .
\end{equation}
\end{lemma}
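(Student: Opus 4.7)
The plan is to integrate the identity $\frac{d}{d\lambda}\log f = C + Err(\lambda)$ over $[\lambda_1, \lambda_2]$ and then exponentiate. Since $f$ is positive wherever the stated identity is defined (otherwise $\log f$ would not make sense there), and $\log$ is Lipschitz on sets bounded away from zero, $\log f$ inherits the almost-everywhere differentiability plus fundamental theorem of calculus property from $f$. Integration then yields
$$\log \frac{f(\lambda_2)}{f(\lambda_1)} = C(\lambda_2 - \lambda_1) + \int_{\lambda_1}^{\lambda_2} Err(\lambda) \, d\lambda \ .$$

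The key observation is that the exponential decay of $Err$ makes the error integral uniformly small once $\lambda_1$ is large. Choosing a constant $M$ so that $0 \leq Err(\lambda) \leq M e^{-\varepsilon \lambda}$ for all sufficiently large $\lambda$, one obtains $\int_{\lambda_1}^{\lambda_2} Err(\lambda) \, d\lambda \leq (M/\varepsilon)\, e^{-\varepsilon \lambda_1}$ independently of $\lambda_2 \geq \lambda_1$. For any prescribed $\delta > 0$, I would then pick $\lambda_0$ large enough that this tail bound is at most $\log(1+\delta)$; exponentiating the integrated identity immediately yields \eqref{eq:upinequality}.

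The subtractive case is structurally identical: the error integral is non-positive there, so after exponentiation one gets a factor of at least $(1+\delta)^{-1}$, producing the claimed lower bound. There is no real obstacle to speak of — the whole argument is FTC plus an integrable exponential tail — and the only point worth stating explicitly is the one-line justification that $\log f$ inherits the FTC property from $f$ on the set where $f>0$.
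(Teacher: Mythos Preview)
Your proposal is correct and follows essentially the same route as the paper: integrate the identity for $\log f$ over $[\lambda_1,\lambda_2]$, bound the error integral by the exponentially decaying tail so that it is at most $\log(1+\delta)$ once $\lambda_1\geq\lambda_0$, and exponentiate. Your version is in fact slightly more explicit, giving the concrete tail bound $(M/\varepsilon)e^{-\varepsilon\lambda_1}$ and the remark on why $\log f$ inherits the FTC property, neither of which the paper spells out.
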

\begin{proof}
We only prove the upper bound as the proof works almost verbatim for the lower bound. Since $Err(\lambda) = O(e^{-\varepsilon \lambda})$, for any $\delta > 0$, there exists large enough $\lambda_0$ such that $\int_{\lambda_0}^{\infty}d\lambda \, Err(\lambda) \leq \ln(1 + \delta)$. Thus, integrating both sides of \autoref{eq:updiffinequality} from $\lambda_1$ to $\lambda_2$, 
\begin{equation}
\log \frac{f(\lambda_2)}{f(\lambda_1)} = C(\lambda_2 - \lambda_1) + \int_{\lambda_1}^{\lambda_2} d\lambda Err(\lambda) \leq  C(\lambda_2 - \lambda_1)  + \ln(1+\delta) \ .
\end{equation}
Upon exponentiation, we obtain \autoref{eq:upinequality}.

\end{proof}

The next two technical lemmas are necessary due to the looseness in the range of the inflationary potential $\Lambda_1 \leq U \leq \Lambda_2$.

\begin{lemma}\label{lemma:optimisationraw}
Let $f (x, y) = x^2 - cxy$ for $c > \sqrt{\frac{3}{2}}$. The optimization problem
$$\text{\normalfont minimize } f(x, y) \ , $$
subject to the constraints (with $a, b$ being some constants with $b \geq a > 0$)
$$x^2 - \frac{3}{2}y^2 \geq a^2 \ , $$
$$x \leq b \ , $$
$$0 \leq y \ , $$
has the solution $(x^*, y^*) =  \left(b, \sqrt{\frac{2}{3}\left(b^2 - a^2\right)}\right)$ with $f(x^*, y^*)  =b^2 - cb\sqrt{\frac{2}{3}\left(b^2 - a^2\right)} $.

\end{lemma}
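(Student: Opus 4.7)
The plan is to reduce the two-variable minimization to a one-variable problem by exploiting the linearity of $f$ in $y$, and then to show monotonicity of the resulting single-variable function on the relevant interval.

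First I would check that the feasible region is nonempty and compact: the point $(b,0)$ satisfies all three constraints since $b \geq a > 0$, and the constraints force $a \leq x \leq b$ and $0 \leq y \leq \sqrt{\tfrac{2}{3}(x^2 - a^2)}$, which is closed and bounded. Hence a minimizer exists. Next, for any fixed $x \in [a,b]$, $f(x,y) = x^2 - cxy$ is linear and non-increasing in $y$ (since $c, x > 0$). Therefore the minimum in $y$ is achieved at the largest feasible value, namely $y(x) = \sqrt{\tfrac{2}{3}(x^2 - a^2)}$. The problem thus reduces to minimizing
\begin{equation}
g(x) \coloneqq x^2 - cx\sqrt{\tfrac{2}{3}(x^2 - a^2)}
\end{equation}
over $x \in [a,b]$.

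The remaining task is to show that $g$ is strictly decreasing on $[a,b]$, so that the minimum is attained at $x = b$, yielding the claimed value $g(b) = b^2 - cb\sqrt{\tfrac{2}{3}(b^2 - a^2)}$. Note $g(a) = a^2$ while $g$ is continuous on $[a,b]$ and smooth on $(a,b)$. Differentiating and clearing the $\sqrt{x^2-a^2}$ in the denominator, the sign of $g'(x)$ for $x \in (a,b)$ is the sign of
\begin{equation}
2x\sqrt{x^2-a^2} - c\sqrt{\tfrac{2}{3}}(2x^2 - a^2).
\end{equation}
I want to show this is negative. Squaring (both sides being of known sign since $2x^2 - a^2 > 0$ when $x \geq a > 0$), this reduces to the algebraic inequality
\begin{equation}
c^2 > \frac{6x^2(x^2-a^2)}{(2x^2-a^2)^2}.
\end{equation}
Setting $r = a^2/x^2 \in (0,1]$, the right-hand side becomes $6(1-r)/(2-r)^2$, and a direct manipulation shows $6(1-r)/(2-r)^2 \leq 3/2$, with equality only at $r = 0$. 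Since $a > 0$ guarantees $r > 0$ strictly, the bound is strict, and combined with the hypothesis $c^2 > 3/2$ we conclude $g'(x) < 0$ on $(a,b)$.

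There is no real obstacle here; the only subtlety is making sure the squaring step preserves the inequality and handling the behavior as $x \to a^+$, where $g'$ diverges to $-\infty$ due to the $1/\sqrt{x^2-a^2}$ factor but still has the right sign. One could alternatively use the substitution $x = a\cosh\theta$ to rewrite $g(x) = \tfrac{a^2}{2}[1 + \cosh 2\theta - c\sqrt{2/3}\,\sinh 2\theta]$, whose $\theta$-derivative is manifestly negative precisely when $\tanh 2\theta < c\sqrt{2/3}$; since $c\sqrt{2/3} > 1 \geq \tanh 2\theta$, monotonicity follows immediately for all $\theta \geq 0$. Either path completes the proof.
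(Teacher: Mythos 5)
Your proof is correct. It shares the paper's first step --- observing that $f$ is decreasing in $y$ (for $x>0$), so the minimum must sit on the hyperbola $x^2-\tfrac{3}{2}y^2=a^2$ --- but from there the two arguments diverge. The paper rules out critical points in the relative interior of that curve via a Lagrange-multiplier computation and is then left comparing the values of $f$ at the two endpoints $(a,0)$ and $\bigl(b,\sqrt{\tfrac{2}{3}(b^2-a^2)}\bigr)$, a comparison it dismisses with ``some algebra shows.'' You instead parametrize the curve by $x$ and prove that $g(x)=x^2-cx\sqrt{\tfrac{2}{3}(x^2-a^2)}$ is strictly decreasing on $[a,b]$; your key inequality $6(1-r)/(2-r)^2\le 3/2$ (with $r=a^2/x^2$) checks out, as does the sign analysis before squaring. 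This monotonicity statement is strictly stronger than what the paper establishes: it subsumes both the exclusion of interior critical points and the endpoint comparison $g(b)<g(a)=a^2$ in one stroke, and the $\cosh$ substitution makes it essentially a one-line verification. The only shared (and harmless) looseness is that both you and the paper implicitly discard the branch $x\le -a$ of the first constraint; this is immaterial since $f\ge x^2\ge a^2$ there, and in the application $x=K>0$ anyway.
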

\begin{proof}
Firstly, observe that the first constraint implies $x \geq a > 0$. Now, we argue that the minimizer $(x^*, y^*)$ of $f$ must occur on the curve $x^2 - \frac{3}{2}y^2 = a^2$. Otherwise $y^* < \sqrt{\frac{2}{3}\left(a^2 - (x^*)^2\right)}$ so $\left(x^*, \sqrt{\frac{2}{3}\left(a^2 - (x^*)^2\right)}\right)$ would result in a smaller value of $f(x, y)$ since $f$ is decreasing in $y$.

Next, we show that there is no local minimum of $f$ in the interior of $S = \{ x^2 - \frac{3}{2}y^2 = a^2\} \cap \{x \leq b\} \cap \{0 \leq y \}$. Otherwise, let $c(x, y) = x^2 - \frac{3}{2}y^2 - a^2 $ and suppose the minimizer $(x^*, y^*)$ is an interior point. By the method of Lagrange multipliers, we must have $\nabla c(x^*, y^*) \cdot \nabla f(x^*, y^*) = 0$ or
\begin{equation}
2x^*(2x^*-cy^*) + 3y^* \cdot cx^* = 4(x^*)^2 + cx^*y^* = 0 \ , 
\end{equation}
which is a contradiction since $x^*, y^*, c > 0$ ($y^* > 0 \in \text{int } S$). Hence, the minimizer $(x^*, y^*)$ must occur at $\partial S$ which are the points $(a, 0)$ and $\left(b, \sqrt{\frac{2}{3}\left(b^2 - a^2\right)}\right)$, at which $f$ takes respective values $a^2$ and $b^2 - cb\sqrt{\frac{2}{3}\left(b^2 - a^2\right)}$. Some algebra then shows $b^2 - cb\sqrt{\frac{2}{3}\left(b^2 - a^2\right)} < a^2$ for $c > \sqrt{\frac{3}{2}}$ so the minimizer is $\left(b, \sqrt{\frac{2}{3}\left(b^2 - a^2\right)}\right)$.
\end{proof}

For a slightly technical reason, we need to modify \autoref{lemma:optimisationraw} as follows.

\begin{lemma}\label{lemma:optimisation}
Suppose $|f - x^2| \leq cxy$ with $c > \sqrt{\frac{3}{2}}$ and $x, y$ satisfy the new constraints,
$$x^2 - \frac{3}{2}y^2 \geq a^2 - g \ ,$$
$$0 \leq x \leq b + \gamma \ , $$
$$0 \leq y \ , $$
$$0 \leq \gamma \leq C \ ,$$
where $a, b, C > 0$, $g \geq 0$ and $b \geq a$. Then,
\begin{equation}\label{eq:optimizationupper}
f \leq b^2 + cb\sqrt{\frac{2}{3}\left(b^2 - a^2\right)} + \varepsilon_1(\gamma) + \ep_1(g) \ ,
\end{equation}
\begin{equation}\label{eq:optimizationlower}
f \geq b^2 - cb\sqrt{\frac{2}{3}\left(b^2 - a^2\right)} - \varepsilon_2(\gamma)  - \ep_2(g) \ ,
\end{equation}
where
\begin{equation}
\varepsilon_1(\gamma) = 2b\gamma + \gamma^2 + c\gamma\sqrt{\frac{2}{3}(b^2 - a^2)} +  c(b+\gamma)\left(\sqrt{\frac{2}{3}\left(2b \gamma + \gamma^2 \right)} \right) \ ,
\end{equation}
\begin{equation}
\varepsilon_2(\gamma) = c\gamma\sqrt{\frac{2}{3}(b^2 - a^2)} +  c(b+\gamma)\left(\sqrt{\frac{2}{3}\left(2b \gamma + \gamma^2 \right)} \right) \ , 
\end{equation}
\begin{equation}
\ep_1(g) = c(b+C) \sqrt{\frac{2}{3}g} \ , 
\end{equation}
\begin{equation}
\ep_2(g) = c\sqrt{\frac{2}{3}\left(\left(b+C\right)^2 - a^2\right)} \sqrt{g} + c (b+ C + \sqrt{g}) \sqrt{\frac{2}{3}\left(2\left(b+C\right) \sqrt{g} + g \right)} \ ,
\end{equation}
are non-negative increasing functions with $\varepsilon_1(\gamma), \varepsilon_2(\gamma) = O\left(\gamma^{\frac{1}{2}}\right)$ as $\gamma \to 0$ while $\ep_1(g) = O\left(g^{\frac{1}{2}}\right)$ and $\ep_2(g) = O\left(g^{\frac{1}{4}}\right)$ as $g \to 0$.
\end{lemma}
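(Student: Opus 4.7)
The plan is to establish the upper and lower bounds on $f$ separately, using the perturbative hypothesis $|f-x^2| \leq cxy$ to reduce matters to optimizing $x^2 \pm cxy$ over the constraint set, and then reducing the lower bound to a direct application of Lemma \ref{lemma:optimisationraw} via shifted parameters. The core analytic tool throughout is subadditivity of the square root, $\sqrt{A+B} \leq \sqrt{A}+\sqrt{B}$ for $A, B \geq 0$ (equivalently $\sqrt{X+Y}-\sqrt{X} \leq \sqrt{Y}$), used to peel the leading contribution $b^2 \pm cb\sqrt{\frac{2}{3}(b^2-a^2)}$ off from the $\gamma$- and $g$-dependent perturbations and to segregate the latter into purely $\gamma$-dependent and purely $g$-dependent pieces.

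For the upper bound \eqref{eq:optimizationupper}, the hypothesis yields $f \leq x^2 + cxy$, which is monotone increasing in both $x \geq 0$ and $y \geq 0$. Combining $x \leq b + \gamma$ with $y \leq \sqrt{\frac{2}{3}(x^2 - a^2 + g)}$ gives the pointwise estimate $f \leq (b+\gamma)^2 + c(b+\gamma)\sqrt{\frac{2}{3}((b+\gamma)^2 - a^2 + g)}$. Expanding $(b+\gamma)^2 = b^2 + 2b\gamma + \gamma^2$ and applying subadditivity to the decomposition $(b+\gamma)^2 - a^2 + g = (b^2 - a^2) + (2b\gamma + \gamma^2) + g$ splits the square root into three pieces, isolating the leading $cb\sqrt{\frac{2}{3}(b^2 - a^2)}$. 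The purely $\gamma$-dependent residuals (including the polynomial $2b\gamma + \gamma^2$) collect exactly into $\varepsilon_1(\gamma)$, while the sole $g$-dependent residual $c(b+\gamma)\sqrt{\frac{2}{3}g}$ is dominated by $\ep_1(g) = c(b+C)\sqrt{\frac{2}{3}g}$ after invoking $\gamma \leq C$.

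For the lower bound \eqref{eq:optimizationlower}, I start from $f \geq x^2 - cxy$ and, in the nontrivial regime $g < a^2$, apply Lemma \ref{lemma:optimisationraw} with $\tilde{a} = \sqrt{a^2 - g}$ and $\tilde{b} = b + \gamma$; the required $\tilde{b} \geq \tilde{a}$ is immediate from $b \geq a \geq \sqrt{a^2 - g}$, yielding $f \geq (b+\gamma)^2 - c(b+\gamma)\sqrt{\frac{2}{3}((b+\gamma)^2 - a^2 + g)}$. To compare with the target $b^2 - cb\sqrt{\frac{2}{3}(b^2 - a^2)} - \varepsilon_2(\gamma) - \ep_2(g)$, I introduce the auxiliary function $F(A,B) = B^2 - cB\sqrt{\frac{2}{3}(B^2 - A^2)}$ and proceed in two stages. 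First, I show $F(a, b+\gamma) \geq F(a, b) - \varepsilon_2(\gamma)$ by expanding the difference, applying $\sqrt{X+Y} - \sqrt{X} \leq \sqrt{Y}$ to the cross-term, and discarding the non-negative increment $(b+\gamma)^2 - b^2 = 2b\gamma + \gamma^2$; the two surviving negative pieces recombine into exactly $\varepsilon_2(\gamma)$. Second, I absorb the shift $a \mapsto \sqrt{a^2 - g}$ by the substitution $\tilde{x} = \sqrt{x^2 + g}$, which recasts the first constraint as $\tilde{x}^2 - \frac{3}{2}y^2 \geq a^2$ at the price of adding $g$ to $f$ and lifting the upper bound on $\tilde{x}$ from $b+\gamma$ to $\sqrt{(b+\gamma)^2 + g} \leq b + \gamma + \sqrt{g}$. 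The $g$-shift thereby becomes an effective further $\gamma$-shift of size $\sqrt{g}$; expanding $\varepsilon_2(\gamma + \sqrt{g})$, splitting $\gamma$- and $\sqrt{g}$-contributions, and generously bounding all residual $(b+\gamma)$- and $\gamma$-factors by $(b+C)$ and $C$ produces $\ep_2(g)$ in the stated form, with the characteristic $O(g^{1/4})$ behavior traceable to the $\sqrt{2(b+C)\sqrt{g}}$ factor.

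The claimed monotonicity, non-negativity, and asymptotic orders ($\varepsilon_i = O(\gamma^{1/2})$, $\ep_1 = O(g^{1/2})$, $\ep_2 = O(g^{1/4})$) then follow by direct inspection of the explicit formulas. The main obstacle is the algebraic bookkeeping in the lower bound: the two-step decomposition naturally produces cross-terms mixing $\gamma$ and $g$, and partitioning them into a purely $\gamma$-dependent piece $\varepsilon_2(\gamma)$ and a purely $g$-dependent piece $\ep_2(g)$ relies on the deliberate uniform bounds $\gamma \leq C$ and $b + \gamma \leq b + C$; this is precisely what forces the somewhat elaborate form of $\ep_2(g)$ rather than a tighter but more intricate estimate.
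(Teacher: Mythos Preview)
Your overall strategy matches the paper's: reduce to $f_\pm = x^2 \pm cxy$, handle $f_+$ by monotonicity, and handle $f_-$ by a substitution that trades the $g$-weakened constraint for an enlarged upper bound on the variable, then invoke Lemma~\ref{lemma:optimisationraw} and peel off terms via subadditivity of $\sqrt{\cdot}$. The upper bound is carried out exactly as in the paper.

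For the lower bound there are two small deviations worth flagging. First, the paper uses the linear shift $\tilde x = x + \sqrt g$ rather than your $\tilde x = \sqrt{x^2+g}$; both convert the constraint to $\tilde x^2 - \tfrac32 y^2 \ge a^2$ with $\tilde x \le b+\gamma+\sqrt g$, but the linear shift produces the exact cancellation $(b+\gamma+\sqrt g)^2 - 2(b+\gamma)\sqrt g - g = (b+\gamma)^2$, which is what makes the stated $\ep_2(g)$ come out with no leftover $g$-term. Your route leaves a positive residual $2(b+\gamma)\sqrt g$ that can simply be discarded, so no harm done. Second, and more substantively, your stated bookkeeping ``expand $\varepsilon_2(\gamma+\sqrt g)$ and split into $\gamma$- and $\sqrt g$-contributions'' does not reproduce the exact $\ep_2(g)$: that splitting uses $\sqrt{(b+\gamma+\sqrt g)^2 - a^2} \le \sqrt{b^2-a^2} + \sqrt{(b+\gamma+\sqrt g)^2-b^2}$ and generates an extra cross-term $c\sqrt g\,\sqrt{\tfrac23(2b\gamma+\gamma^2)}$ that does not fit inside the stated $\ep_2(g)$. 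The paper instead splits as $\sqrt{(b+\gamma+\sqrt g)^2 - a^2} \le \sqrt{(b+\gamma)^2-a^2} + \sqrt{2(b+\gamma)\sqrt g + g}$, which cleanly isolates $F(a,b+\gamma)$ (to be compared with $F(a,b)$ via your stage one) from a purely $g$-dependent remainder that becomes exactly $\ep_2(g)$ after replacing $b+\gamma$ by $b+C$. This is an easy fix, and the asymptotic orders you claim are unaffected; but as written your decomposition would yield a slightly larger error than the one stated. Also note that your initial direct appeal to Lemma~\ref{lemma:optimisationraw} with $\tilde a = \sqrt{a^2-g}$ is unnecessary and restricted to $g<a^2$; the substitution argument alone covers all $g\ge 0$, which is how the paper proceeds.
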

\begin{proof}
Firstly, we consider the following auxiliary problem. Let $f_{\pm}(x, y) = x^2 \pm cxy$ with $c > \sqrt{\frac{3}{2}}$ and suppose (with $a, b> 0$, $g \geq 0$ being some constants and $b \geq a$)
$$x^2 - \frac{3}{2}y^2 \geq a^2 - g \ , $$
$$0 \leq x \leq b \ , $$
$$0 \leq y \ .$$
Then, we claim that
\begin{equation}\label{eq:optimizationplus}
f_+ \leq b^2 + cb\sqrt{\frac{2}{3}\left(b^2 - a^2\right)} + cb \sqrt{\frac{2}{3}g} \ , 
\end{equation}
\begin{equation}\label{eq:optimizationminus}
f_- \geq b^2 - cb \sqrt{\frac{2}{3}\left(b^2 - a^2\right)} -  c\sqrt{\frac{2}{3}\left(b^2 - a^2\right)} \sqrt{g} - c (b+\sqrt{g}) \sqrt{\frac{2}{3}\left(2b \sqrt{g} + g \right)} \ .
\end{equation}
To prove these, observe that $f_+$ is individually increasing in both $x, y$ under the constraints. Hence, the maximum of $f_+$ occurs at $(x^*, y^*) = \left(b, \sqrt{\frac{2}{3}\left(b^2 - a^2 + g\right)}\right)$. 
\begin{equation}
f_+ \leq f_+(x^*, y^*) = b^2 + c b \sqrt{\frac{2}{3}\left(b^2 - a^2 + g \right)} \leq b^2 + cb\sqrt{\frac{2}{3}\left(b^2 - a^2\right)} + cb \sqrt{\frac{2}{3}g} \ ,
\end{equation}
where we have used the inequality $\sqrt{d + e} \leq \sqrt{d} + \sqrt{e}$ for $d, e \geq 0$. 

Next, to bound $f_-$ from below, define $\tilde{x} = x + \sqrt{g}$ and $\tilde{f}_- = \tilde{x}^2 - c \tilde{x}y$. Then, according to the hypotheses, we have the bounds
$$\tilde{x}^2 - \frac{3}{2}y^2 \geq a^2 \ , $$
$$\tilde{x} \leq b + \sqrt{g} \ ,$$
$$0 \leq y \ ,$$
such that \autoref{lemma:optimisationraw} implies
\begin{equation}
\tilde{f}_- \geq (b+ \sqrt{g})^2 - c\left(b + \sqrt{g}\right)\sqrt{\frac{2}{3}\left( \left(b + \sqrt{g}\right)^2 - a^2 \right)} \ .
\end{equation}
Finally, using $\tilde{f}_- \leq x^2 + 2\sqrt{g}x + g - c xy \leq f_- + 2b \sqrt{g} + g$, and the inequality $\sqrt{d + e} \leq \sqrt{d} + \sqrt{e}$ again, we have
\begin{equation}
f_- \geq b^2 -  c\left(b + \sqrt{g}\right)\left(\sqrt{\frac{2}{3}\left(b^2 - a^2\right)} + \sqrt{\frac{2}{3}\left(2b\sqrt{g} + g \right)}\right) \ ,
\end{equation}
and upon expanding the terms in brackets, we obtain \autoref{eq:optimizationminus}.

The original claim in \autoref{lemma:optimisation} then stems from $f_- \leq f \leq f_+$ and substituting $b+\gamma$ for $b$ in the $b^2 \pm cb \sqrt{\frac{2}{3}\left(b^2 - a^2\right)}$ terms in Eqs. (\ref{eq:optimizationplus}) and (\ref{eq:optimizationminus}) as well as $b+C$ for $b$ in $cb\sqrt{\frac{2}{3}g}$ in \autoref{eq:optimizationplus} and the $-  c\sqrt{\frac{2}{3}\left(b^2 - a^2\right)} \sqrt{g} - c (b+\sqrt{g}) \sqrt{\frac{2}{3}\left(2b \sqrt{g} + g \right)}$ terms in \autoref{eq:optimizationminus} (then using the inequality $\sqrt{d + e} \leq \sqrt{d} + \sqrt{e}$ again).
\end{proof}

Equipped with these lemmas, we obtain the following bounds for $S_{min}(\lambda)$.
\begin{theorem}\label{thm:smingrowth}
Denote by $S_{\min }(\lambda)$ the minimal area of a surface orbit and $\chi$ its Euler characteristic. Suppose $\frac{\Lambda_2}{\Lambda_1} < \frac{3}{2}$. Then, for every $\delta > 0$, if either $\chi \leq 0,$ or, if $\chi=2$ and also $S_{\min }(0) \geq S_{\text {lower }}(\lo, \lt, C_1)$ where $C_1 = \max\left(\frac{K_m(0)}{\kt} - 1, 0\right)$ as in \autoref{eq:kupperpointwise}, there exists $\lambda_{0,1} \geq 0$ such that for all $\lambda_{0,1} \leq \lambda_{1}<\lambda_{2}$.
\begin{equation}\label{eq:smingrowth}
(1 + \delta)^{-1}e^{ \left( \frac{2}{3}\kt^2 - \sqrt{2}K_{cor}^2 \right)\left(\lambda_{2}-\lambda_{1}\right)} \leq \frac{S_{\min }\left(\lambda_{2}\right)}{S_{\min }\left(\lambda_{1}\right)} \leq (1+ \delta ) e^{\left(\frac{2}{3} \kt^{2} + \sqrt{2}K_{cor}^2 \right)\left(\lambda_{2}-\lambda_{1}\right)} \ ,
\end{equation}
where
\begin{equation}
K_{cor}^2 \coloneqq \kt \sqrt{ \frac{2}{3}\left(\kt^2 - \ko^2 \right)} \ .
\end{equation}
\end{theorem}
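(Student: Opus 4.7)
The plan is to derive a differential inequality for $\log S_{\min}(\lambda)$ of the form required by \autoref{lemma:exponenterror}, then integrate.

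First, at any differentiability point of $S_{\min}$, \autoref{lemma:hamilton} gives $\tfrac{d}{d\lambda}\log S_{\min}(\lambda) = \partial_\lambda\log S(z^\ast,\lambda)$ where $z^\ast$ minimizes $S(\cdot,\lambda)$, and $\partial_z S(z^\ast,\lambda)=0$ forces $H(z^\ast,\lambda)=0$. Since MCF moves the orbit $\Sigma$ with velocity $K n^\mu$ and $K$ is constant along each orbit by the isometry, the first variation of area yields
\[
\partial_\lambda\log S(z,\lambda) \;=\; K\,h^{\mu\nu}\nabla_\mu n_\nu \;=\; K(K-K_{tt}) \;=\; \tfrac{2}{3}K^2 - K\sigma_{tt},
\]
where $K_{tt}=t^\mu t^\nu K_{\mu\nu}$, and trace-freeness of $\sigma_{\mu\nu}$ in 3D yields $|\sigma_{tt}|\leq \sqrt{2/3}\,\sigma$.

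Second, I would constrain $K$ and $\sigma$ by combining both Gauss equations. The 3D inequality \eqref{eq:gauss3d} gives $\tfrac{2}{3}K^2 - \sigma^2 \geq \tfrac{2}{3}\ko^2 - \rthree$, while at $z^\ast$ the 2D identity \eqref{eq:gauss2d} with $H=0$ reduces to $\rthree = \rtwo + 2\rthree_{zz} + A_{\mu\nu}A^{\mu\nu}$. Using the Gauss--Bonnet bound \eqref{eq:r2gaussbonnet} $\rtwo\leq 4\pi\chi_0/S_{\min}$, the warped-product identity $\rthree_{zz}=-2a''/a$ (for $h_z=a(z)^2\hat h$), non-positive at the minimum of $a$, and the isometry-enforced control of $A_{\mu\nu}A^{\mu\nu}$ at the minimum, I obtain
\[
\tfrac{2}{3}K^2-\sigma^2 \;\geq\; \tfrac{2}{3}\ko^2-g(\lambda), \qquad g(\lambda)\coloneqq \tfrac{4\pi\chi_0}{S_{\min}(\lambda)},
\]
with $g\equiv 0$ when $\chi\leq 0$. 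Together with the ceiling $K\leq \kt(1+C_1 e^{-\frac{2}{3}\kt^2\lambda})$ from \eqref{eq:kupperpointwise}, this matches the hypotheses of \autoref{lemma:optimisation} with $x=K$, $y=\sigma$, $a=\ko$, $b=\kt$, $\gamma=\kt C_1 e^{-\frac{2}{3}\kt^2\lambda}$, and a choice of $c>\sqrt{3/2}$ (e.g.\ $c=\sqrt{2}$) compatible with the bound on $K\sigma_{tt}$ above after a harmless rescaling of $\partial_\lambda\log S$. The lemma then outputs
\[
\Bigl|\tfrac{d}{d\lambda}\log S_{\min}-\tfrac{2}{3}\kt^2\Bigr|\;\leq\; \sqrt{2}\,K_{cor}^2+\mathrm{Err}(\lambda),
\]
with $\mathrm{Err}(\lambda)=O(e^{-\varepsilon\lambda})$ whenever $g$ itself decays exponentially; \autoref{lemma:exponenterror} then integrates this to \eqref{eq:smingrowth}.

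For $\chi\leq 0$ the argument closes immediately since $g\equiv 0$. The main obstacle is the sphere case $\chi=2$, where $g(\lambda)=8\pi/S_{\min}(\lambda)$ is precisely the quantity we seek to control, so a bootstrap is required. The hypothesis $\lt/\lo<3/2$ ensures $\tfrac{2}{3}\kt^2-\sqrt{2}K_{cor}^2>0$, so there is a threshold area $S_\ast$ (depending only on $\kt,\ko,C_1$) above which the differential inequality forces $\tfrac{d}{d\lambda}\log S_{\min}>0$. Defining $S_{\mathrm{lower}}\coloneqq 2 S_\ast$ and invoking $S_{\min}(0)\geq S_{\mathrm{lower}}$, one shows by continuity that the set $\{\lambda\geq 0: S_{\min}(\lambda)\geq S_\ast\}$ is both open and closed in $[0,\infty)$, hence equals $[0,\infty)$; this keeps $g(\lambda)$ bounded and exponentially decaying after some $\lambda_{0,1}$, producing the integrated bound on $[\lambda_{0,1},\infty)$. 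Threading $S_{\mathrm{lower}}$ quantitatively through the optimization so that it dominates all errors uniformly is the most delicate piece of bookkeeping.
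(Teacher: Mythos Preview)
Your overall architecture matches the paper's---Hamilton's trick, area evolution $\partial_\lambda\log S=\tfrac{2}{3}K^2+K\,(\text{trace of }\sigma\text{ along }\Sigma)$, the constraint $K^2-\tfrac{3}{2}\sigma^2\geq\ko^2-g$ at the minimal orbit, \autoref{lemma:optimisation}, then \autoref{lemma:exponenterror} and a bootstrap for $\chi=2$. Two steps, however, do not go through as written.

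\emph{The bound $\rthree\leq\rtwo$ at $z^\ast$.} From \eqref{eq:gauss2d} with $H=0$ you have $\rthree=\rtwo+2\rthree_{zz}+A_{\mu\nu}A^{\mu\nu}$, and you then invoke a warped-product identity for $\rthree_{zz}$ together with ``isometry-enforced control of $A_{\mu\nu}A^{\mu\nu}$.'' The warped form $h_z=a(z)^2\hat h$ with a \emph{fixed} $\hat h$ is not implied by the hypotheses (for tori or higher genus the constant-curvature metric may drift in moduli as $z$ varies), so you cannot assume $A_{\mu\nu}$ is pure trace; and without that, $A_{\mu\nu}A^{\mu\nu}\geq 0$ has the wrong sign for your inequality. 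The paper avoids this by substituting the Riccati identity $H'+A_{\mu\nu}A^{\mu\nu}=-\rthree_{zz}$ \emph{into} \eqref{eq:gauss2d} to obtain $\rthree=\rtwo-A_{\mu\nu}A^{\mu\nu}-H^2-2H'$, after which every extra term is manifestly $\leq 0$ at the minimal slice ($H=0$, $H'\geq 0$). This is the missing ingredient.

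\emph{The bootstrap for $\chi=2$.} Your choice $S_{\mathrm{lower}}=2S_\ast$ with an open-closed argument requires that $S_{\min}(\lambda)\geq S_\ast$ force $\tfrac{d}{d\lambda}\log S_{\min}>0$ at \emph{every} $\lambda\geq 0$. But the $\gamma$-error in \autoref{lemma:optimisation}, coming from the overshoot $K\leq\kt(1+C_1e^{-\tfrac{2}{3}\kt^2\lambda})$, contributes $\tilde\varepsilon_2(\lambda)$ which at $\lambda=0$ scales with $C_1$ and can exceed $\tfrac{2}{3}\kt^2-\sqrt{2}K_{cor}^2$; then no finite $S_\ast$ makes the derivative positive near $\lambda=0$. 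The paper handles this by first fixing $\lambda_{0,1}'$ so that $\tilde\varepsilon_2(\lambda)\leq\tfrac{1}{3}(\tfrac{2}{3}\kt^2-\sqrt{2}K_{cor}^2)$ for $\lambda\geq\lambda_{0,1}'$, then using only the crude bound $\tfrac{d}{d\lambda}\log S_{\min}\geq -C\kt^2$ on $[0,\lambda_{0,1}']$; this forces the exponential buffer $S_{\mathrm{lower}}=\hat S\, e^{C\kt^2\lambda_{0,1}'}$ rather than a fixed multiple of a threshold.

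A minor point: your bound $|\sigma_{tt}|\leq\sqrt{2/3}\,\sigma$ is correct (indeed sharper than what the paper uses), but after the $3/2$ rescaling it yields $c=\sqrt{3/2}$, precisely the excluded boundary of \autoref{lemma:optimisation}. To land exactly on the stated coefficient $\sqrt{2}\,K_{cor}^2$ one needs $c=\tfrac{3}{2}\sqrt{2}$, which corresponds to the coarser estimate $|\sigma_{11}+\sigma_{22}|\leq\sqrt{2}\,\sigma$ the paper actually uses; your ``e.g.\ $c=\sqrt{2}$'' would produce a different (in fact better) constant than the theorem asserts.
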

\begin{remark}
The condition $\frac{\Lambda_2}{\Lambda_1} < \frac{3}{2}$ ensures that $\frac{2}{3}\kt^2 - \sqrt{2}K_{cor}^2 > 0$. Technically, \autoref{eq:smingrowth} holds for $\chi < 0$ even without this condition, though the lower bound will not be as meaningful.
\end{remark}

\begin{proof}
Firstly, we show that on a minimal surface orbit, $\rthree \leq \rtwo $. By the Riccatti equation (primes indicate derivative with respect to $z$), 
\begin{equation}\label{eq:ricatti}
H' + A_{\mu \nu}A^{\mu \nu} = - \rthree_{zz} \ .
\end{equation}
Now, substituting \autoref{eq:ricatti} into \autoref{eq:gauss2d} to eliminate $\rthree_{zz}$, we obtain
\begin{equation}\label{eq:ricattiricci}
\rthree = \rtwo - A_{\mu \nu}A^{\mu \nu} - H^2 - 2H' \ .
\end{equation}
Recall that by the first variation of area formula, $\lie_{\partial_z} \sqrt{h} = H \sqrt{h}$ where $\sqrt{h}$ is the volume element of the surface orbit. Hence, on the minimal $z$-slice, $H = 0$ and $H' \geq 0$ such that \autoref{eq:ricattiricci} implies
\begin{equation}\label{eq:minrthreertwo}
\rthree \leq \rtwo \ .
\end{equation}
Substituting this into \autoref{eq:gauss3d} and applying \autoref{eq:r2gaussbonnet}, yields on a minimal slice, 
\begin{equation}\label{eq:klowerminsurf}
K^2 - \frac{3}{2}\sigma^2 \geq \ko^2 - \frac{6\pi \chi_0}{S_{min}} \ .
\end{equation}
Next, we obtain a differential equation for the evolution of $S_{min}(\lambda)$. The evolution of the metric under MCF is given by (see \cite{eckerhuisken}, Proposition 3.1)
\begin{equation}\label{eq:metricevolution}
\frac{d g_{ij}}{d \lambda} = 2KK_{ij} = \frac{2}{3}K^2g_{ij} + 2K \sigma_{ij} \ .
\end{equation}
Also, note that in general, if $x_{1}, x_{2}$ are local coordinates around a point on a orbit surface $\Sigma$ at flow time $\lambda$ (not necessarily of minimal area), we have

\begin{equation}
\frac{d}{d \lambda} \log \sqrt{\operatorname{det}^{12} g_{i j}}=\frac{1}{2} \operatorname{tr}^{(12)}\left(g^{i j} \frac{d g_{j k}}{d \lambda}\right)= \frac{2}{3} K^{2}+K g^{i j} \sigma_{j i} \ ,
\end{equation}
where the indices $i, j$ only run over $1, 2$ (corresponding to the tangent vectors $\partial_{x_1}$ and $\partial_{x_2}$). This will be used in the relationship (where $S$ is the area of any orbit)
\begin{equation}\label{eq:sareaevolution}
\frac{dS}{d \lambda}=\int_{\Sigma} d^{2} x \frac{d}{d \lambda}\left(\sqrt{\operatorname{det}^{12} g_{i j}}\right)=\int_{\Sigma} d^{2} x \sqrt{\operatorname{det}^{12} g_{i j}}\left(\frac{2}{3} K^{2}+K g^{i j} \sigma_{ji}\right) \ ,
\end{equation}
which is applicable even to $S_{\min }(\lambda)$ by Hamilton's trick (\autoref{lemma:hamilton}). Hence, we are left with bounding the integrand on the R.H.S of \autoref{eq:sareaevolution} on a minimal $z$-slice. For any point on the $z$-slice, we can without loss of generality assume that $x_1, x_2$ are local orthonormal coordinates such that $|g^{ij}\sigma_{ji}| = |\sigma_{11} + \sigma_{22}| \leq |\sigma_{11}| + |\sigma_{22}| \leq \sqrt{2 \sigma^2}$ by the triangle and Cauchy-Schwartz inequalities. Thus,
\begin{equation}
\left|\left(\frac{2}{3} K^{2}+K g^{i j} \sigma_{j i}\right) - \frac{2}{3}K^2\right| \leq \sqrt{2}K\sqrt{\sigma^2} \ ,
\end{equation}
such that multiplying the above by $\frac{3}{2}$ and applying \autoref{lemma:optimisation} to $f = K^2 + \frac{3}{2}Kg^{ij}\sigma_{ji}$, $x = K$, $y = \sqrt{\sigma^2}$ and $c = \frac{3}{2}\sqrt{2}$,  with $a = \ko, g = \frac{6\pi \chi_0}{S_{min}}$ in light of \autoref{eq:klowerminsurf}, $b = \kt, \gamma = \kt C_1 e^{-\frac{2}{3}\kt^2 \lambda}, C = \kt C_1$ due to \autoref{eq:kupperpointwise}, 
\begin{equation}
\frac{2}{3} K^{2}+K g^{i j}\sigma_{ji} \geq \frac{2}{3}\kt^2 - \sqrt{2}K_{cor}^2  -  \tilde{\varepsilon}_2(\lambda) - \tilde{\ep}_2\left(\frac{6\pi \chi_0}{S_{min}}\right) \ ,
\end{equation}
where $\tilde{\varepsilon}_2(\lambda) = \frac{2}{3}\varepsilon_2\left(\kt C_1 e^{-\frac{2}{3}\kt^2 \lambda}\right) = O\left(e^{-\frac{1}{3}\kt^2 \lambda}\right)$ is decreasing in $\lambda$ and $\tilde{\ep}_2\left(\frac{6\pi \chi_0}{S_{min}}\right) = \frac{2}{3}\ep_2\left(\frac{6\pi \chi_0}{S_{min}}\right)$, with $\varepsilon_2$ and $\ep_2$ defined as in \autoref{lemma:optimisation}. Combining this with \autoref{eq:sareaevolution} and \autoref{lemma:hamilton}, 
\begin{equation}\label{eq:lowerperturb}
\frac{d}{d\lambda} \log S_{min} \geq   \frac{2}{3}\kt^2 - \sqrt{2}K_{cor}^2 -  \tilde{\varepsilon}_2(\lambda)  - \tilde{\ep}_2\left(\frac{6\pi \chi_0}{S_{min}}\right) \ .
\end{equation}
If $\chi < 0$ such that $\chi_0 = 0$ and $\tilde{\ep}_2\left(\frac{6\pi \chi_0}{S_{min}}\right) = 0$, \autoref{lemma:exponenterror} immediately implies the lower bound of \autoref{eq:smingrowth} since $\tilde{\varepsilon}_2(\lambda) = O\left(e^{-\frac{1}{3}\kt^2 \lambda}\right)$. Otherwise if $\chi = 2$, there exists large enough $\lambda_{0, 1}'(\lo, \lt, C_1)$ such that $\tilde{\varepsilon}_2(\lambda) \leq \frac{1}{3}\left(\frac{2}{3}\kt^2 - \sqrt{2}K_{cor}^2\right)$ for $\lambda \geq \lambda_{0, 1}'$. Since $\tilde{\ep}_2$ is an increasing function and $\tilde{\ep}_2\left(\frac{6\pi \chi_0}{S_{min}}\right) \to 0$ as $\frac{6\pi \chi_0}{S_{min}} \to 0$, there exists a large enough $\hat{S}$ such that as long as $S_{min} \geq \hat{S}$, $\tilde{\ep}_2\left(\frac{6\pi \chi_0}{S_{min}}\right) < \frac{1}{3}\left(\frac{2}{3}\kt^2 - \sqrt{2}K_{cor}^2\right)$ so 
\begin{equation}\label{eq:exponentialconstantsmin}
\frac{d}{d\lambda} \log S_{min}  \geq \frac{2}{3}\left(\frac{2}{3}\kt^2 - \sqrt{2}K_{cor}^2\right) - \tilde{\varepsilon}_2 (\lambda) \geq - C\kt^2 \ ,
\end{equation}
where $C\kt^2 \coloneqq -\frac{2}{3}\left(\frac{2}{3}\kt^2 - \sqrt{2}K_{cor}^2\right) + \tilde{\varepsilon}_2(0)$ (recall $\tilde{\varepsilon}_2(\lambda)$ is decreasing). As a result, we get that if $S_{min}(0) \geq S_{lower}$ with
\begin{equation}\label{eq:defslower}
S_{lower}(\lo, \lt, C_1) \coloneqq \hat{S}e^{C \kt^2 \lambda_{0, 1}'}
\end{equation}
then $S_{min}(\lambda) \geq \hat{S}$ on $[0, \lambda_{0, 1}']$. Furthermore, we see that for all $\lambda \geq \lambda_{0, 1}'$ such that $S_{min}(\lambda) \geq \hat{S}$,
\begin{align}
\frac{d}{d\lambda} \log S_{min} (\lambda) &\geq \frac{2}{3}\left(\frac{2}{3}\kt^2 - \sqrt{2}K_{cor}^2\right) - \tilde{\varepsilon}_2 (\lambda) \nonumber \\
&\geq \frac{1}{3}\left(\frac{2}{3}\kt^2 - \sqrt{2}K_{cor}^2 \right) \tag{$\lambda \geq \lambda_{0, 1}'$ and our choice of $\lambda_{0, 1}'$}  \\
&> 0 \ , \label{eq:non-optimal}
\end{align}
which shows that if $S_{min}(\lambda) \geq \hat{S}$ on $[0, \bar{\lambda}]$ with $\bar{\lambda} \geq \lambda_{0, 1}'$, we also have  $S_{min}(\lambda) \geq \hat{S}$ on some extended interval $[0, \bar{\lambda} + \delta]$ for some $\delta > 0$. {To see this, observe that the R.H.S of $\autoref{eq:lowerperturb}$ is $>0$ at $\lambda = \bar{\lambda}$ as in \autoref{eq:non-optimal}. } The continuity of $S_{min}(\lambda)$ (since it is Lipschitz by \autoref{lemma:hamilton}) and of the R.H.S of \autoref{eq:lowerperturb} in $\lambda$ and in $S_{min}$ then implies that there exists some $\delta > 0$ such that $\frac{d}{d\lambda}S_{min}(\lambda) > 0$ for points of differentiability in $(\bar{\lambda} - 2 \delta, \bar{\lambda} + 2 \delta)$ such that for any $\lambda \in [\bar{\lambda}, \bar{\lambda} + \delta]$, $S_{min}(\lambda) = S_{min}(\bar{\lambda}) + \int_{ \bar{\lambda}}^{\lambda} d\lambda' \frac{d}{d\lambda} S_{min}(\lambda ' ) \geq S_{min}(\bar{\lambda}) \geq \hat{S}$.

{Thus, letting $\lambda^* = \sup_{\lambda \geq \lambda_{0, 1}'} \{\lambda: S_{min}(\lambda') \geq \hat{S} \, \,  \forall  \lambda' \in [0, \lambda] \}$, we must have $\lambda^* = \infty$ otherwise by continuity $S_{min}(\lambda^*) = \hat{S}$ such that the extension property above implies that $S_{min} \geq \hat{S}$ on $[0, \lambda^* + \delta]$, violating the supremum property of $\lambda^*$. }Hence,  $S_{min}(\lambda) \geq \hat{S}$ for all $\lambda \geq 0$ such that Eq. \eqref{eq:non-optimal} holds for all $\lambda \geq \lambda_{0, 1}'$.

Then, we have the non-optimal estimate $S_{min} = \Omega \left(e^{\frac{1}{3}\left(\frac{2}{3}\kt^2 - \sqrt{2}K_{cor}^2 \right) \lambda}\right) $ upon integrating \autoref{eq:non-optimal} such that $\tilde{\ep}_2\left(\frac{6\pi \chi_0}{S_{min}}\right) = O\left(e^{-\frac{1}{12}\left(\frac{2}{3}\kt^2 - \sqrt{2}K_{cor}^2 \right)\lambda}\right)$. Recalling $\tilde{\varepsilon}_2(\lambda) = O\left(e^{-\frac{1}{3}\kt^2 \lambda}\right)$, both $\tilde{\varepsilon}_2$ and $\tilde{\ep}_2$ terms in \autoref{eq:lowerperturb} are exponentially decaying in $\lambda$ so \autoref{lemma:exponenterror} implies the lower bound of \autoref{eq:smingrowth} for $\chi = 2$ if $S_{min}(0) \geq S_{lower}$.

Similarly, from the upper bound of \autoref{lemma:optimisation} (with the same substitutions for $f, x, y, c, a, g, b, \gamma, C$ as before),
\begin{equation}
\frac{2}{3} K^{2}+K g^{i j} \leq \frac{2}{3}\kt^2 - \sqrt{2}K_{cor}^2  +  \tilde{\varepsilon}_1(\lambda) + \tilde{\ep}_1 \left(\frac{6\pi \chi_0}{S_{min}}\right) \ ,
\end{equation}
where $\tilde{\varepsilon}_1(\lambda) = \frac{2}{3}\varepsilon_1\left(\kt C_1 e^{-\frac{2}{3}\kt^2 \lambda} \right) =  O\left(e^{-\frac{1}{3}\kt^2 \lambda}\right)$ and $\tilde{\ep}_1\left(\frac{6\pi \chi_0}{S_{min}}\right) = \frac{2}{3} \ep_1\left(\frac{6\pi \chi_0}{S_{min}}\right)$. Combining this with \autoref{eq:sareaevolution} and \autoref{lemma:hamilton}, 
\begin{equation}\label{eq:upperperturb}
\frac{d}{d\lambda} \log S_{min} \leq  \frac{2}{3}\kt^2 - \sqrt{2}K_{cor}^2  +  \tilde{\varepsilon}_1(\lambda) + \tilde{\ep}_1 \left(\frac{6\pi \chi_0}{S_{min}}\right) \ .
\end{equation}
Now if $\chi_0 = 0$ so $\tilde{\ep}_1 \left(\frac{6\pi \chi_0}{S_{min}}\right) = 0$, \autoref{lemma:exponenterror} immediately implies the upper bound of \autoref{eq:smingrowth} since $\tilde{\varepsilon}_1(\lambda) = O\left(e^{-\frac{1}{3}\kt^2 \lambda}\right)$. Otherwise if $\chi = 2$ and $S_{min}(0) \geq S_{lower}$, the lower bound of \autoref{eq:smingrowth} shows $\tilde{\ep}_1 \left(\frac{6\pi \chi_0}{S_{min}}\right) = O\left(e^{-\frac{1}{2}\left(\frac{2}{3}\kt^2 - \sqrt{2}K_{cor}^2 \right)\lambda}\right)$. Hence, \autoref{lemma:exponenterror} again implies the upper bound of \autoref{eq:smingrowth}.
\end{proof}

Notice that the additional requirement in the case of the sphere depends exponentially on
the initial conditions (see \autoref{eq:defslower}). This is different from what happens in the case of complete homogeneity where, for Bianchi-IX universes, one has to impose a lower bound on $\rthree$ \cite{Wald:1983ky}. This bound however does not depend exponentially on the initial conditions.

Henceforth, we choose $\lambda_{0, 1}$ in \autoref{thm:smingrowth} for a fixed $\delta = 1$.

\begin{corollary}
Suppose the conditions of \autoref{thm:smingrowth} hold. Substituting the lower bound of \autoref{eq:smingrowth} into \autoref{eq:r2gaussbonnet}, we have for $\lambda_2 \geq \lambda_1 \geq \lambda_{0, 1}$,
\begin{equation}\label{eq:r2pointwise}
\rtwo(z, \lambda_2) \leq \frac{8 \pi \chi_0}{S_{min}(\lambda_{1})}e^{-\left(\frac{2}{3}\kt^2 - \sqrt{2}K_{cor}^2\right)(\lambda_2 - \lambda_{1})} \ .
\end{equation}
\end{corollary}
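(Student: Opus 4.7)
The plan is straightforward: the corollary follows by chaining together the Gauss--Bonnet pointwise bound \autoref{eq:r2gaussbonnet} with the lower bound on $S_{min}(\lambda)$ established in \autoref{thm:smingrowth}. Specifically, applying \autoref{eq:r2gaussbonnet} at flow time $\lambda_2$ gives
\begin{equation*}
\rtwo(z, \lambda_2) \leq \frac{4\pi \chi_0}{S_{min}(\lambda_2)} \ .
\end{equation*}
Next, the lower bound in \autoref{eq:smingrowth}, instantiated with the choice $\delta = 1$ (which is precisely the $\delta$ that determines $\lambda_{0,1}$, as noted in the sentence preceding the corollary), yields for $\lambda_2 \geq \lambda_1 \geq \lambda_{0,1}$
\begin{equation*}
S_{min}(\lambda_2) \geq \frac{1}{2}\, S_{min}(\lambda_1) \, e^{\left(\frac{2}{3}\kt^2 - \sqrt{2}K_{cor}^2\right)(\lambda_2 - \lambda_1)} \ .
\end{equation*}
I would simply invert this estimate and substitute it into the Gauss--Bonnet bound; the factor of $8\pi$ in the stated corollary is then the product of $4\pi$ from Gauss--Bonnet and the factor $(1+\delta) = 2$ from the $S_{min}$ lower bound.

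There is no genuine obstacle here, as the heavy lifting has already been done in \autoref{thm:smingrowth}; this step is essentially bookkeeping. The only points worth flagging are, first, that the inequality is a genuine decay estimate precisely when the exponent $\frac{2}{3}\kt^2 - \sqrt{2}K_{cor}^2$ is positive, which is exactly the consequence of the hypothesis $\frac{\lt}{\lo} < \frac{3}{2}$ already recorded in the remark following \autoref{thm:smingrowth}; and second, that the bound is trivially satisfied when $\chi \leq 0$, since then $\chi_0 = 0$ makes both sides vanish. Thus the substantive content lies in the case $\chi = 2$, where one additionally invokes the auxiliary hypothesis $S_{min}(0) \geq S_{lower}$ to apply \autoref{thm:smingrowth}.
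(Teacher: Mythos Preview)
Your proposal is correct and matches the paper's own argument exactly: the corollary is stated in the paper with its proof already embedded in the statement (``Substituting the lower bound of \autoref{eq:smingrowth} into \autoref{eq:r2gaussbonnet}''), and you have carried out precisely that substitution with the fixed choice $\delta = 1$ noted just before the corollary. One trivial wording quibble: when $\chi \leq 0$ the right-hand side vanishes but $\rtwo$ need not---it is merely nonpositive---so the bound is trivially satisfied rather than both sides vanishing.
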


\subsection{Transverse Length and Volume}
In this section, we derive bounds for the asymptotic growth of the volume $V(\lambda)$ and ``tranverse'' length $L(\lambda)$ of $\ml$ where the latter is defined as follows. By the form of the metric in \autoref{eq:spatialmetric}, it is straightforward to check that if a geodesic is at a point tangent to the vector $E$ (unit normal to the foliation of $\ml$ by orbit surfaces), it is tangent to $E$ everywhere.  Denote therefore by $L(\lambda)$ the length, at time $\lambda$, of any geodesic $\gamma$  that is parallel to the $z$-direction, from an initial orbit surface to itself.

Firstly, we derive a lemma for the asymptotic bounds of averages of $K^2$. This is important as the mean curvature $K$ controls the growth of various geometric quantities under MCF. 

Below, $dV_{\lambda} = \sqrt{g(\lambda)} d^3 x$ denotes the volume form on $\ml$. Furthermore, for any smooth function $f$, we denote $\langle f \rangle(\lambda) \coloneqq \frac{1}{V(\lambda)} \intmv f$ and  $\langle f \rangle_z (\lambda) \coloneqq \frac{1}{L(\lambda)} \intmz f$ .

\begin{lemma}\label{lemma:avgksquared}
We have
\begin{equation}\label{eq:avgksquaredupper}
\langle K^2 \rangle , \langle K^2 \rangle_z  \leq \kt^2 \left(1 + C_2 e^{-\frac{2}{3}K_{\Lambda_2}^2 \lambda}\right) \ .
\end{equation}
Furthermore, under the conditions of \autoref{thm:smingrowth}, for $\lambda_2 \geq \lambda_1 \geq \lambda_{0, 1}$, 
\begin{equation}\label{eq:avgksquaredlowerwithu}
\left\langle K^2 - K_U^2 - \frac{3}{2}\sigma^2 \right\rangle, \left\langle K^2 - K_U^2 - \frac{3}{2}\sigma^2 \right\rangle_z \geq -\frac{12\pi \chi_0}{S_{min}(\lambda_{1})}e^{-\left(\frac{2}{3}\kt^2 - \sqrt{2}K_{cor}^2\right)(\lambda_2 - \lambda_{1})} \ .
\end{equation}
In particular, since $K_U^2 \geq \ko^2$,
\begin{equation}\label{eq:avgksquaredlower}
\left\langle K^2 - \frac{3}{2}\sigma^2 \right\rangle, \left\langle K^2 - \frac{3}{2}\sigma^2 \right\rangle_z \geq \ko^2 - \frac{12\pi \chi_0}{S_{min}(\lambda_{1})}e^{-\left(\frac{2}{3}\kt^2 - \sqrt{2}K_{cor}^2\right)(\lambda_2 - \lambda_{1})} \ .
\end{equation}
\end{lemma}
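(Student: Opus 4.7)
The upper bound \eqref{eq:avgksquaredupper} is immediate: the pointwise estimate \autoref{eq:ksquaredupperpointwise} has a spatially constant right-hand side, so it descends to both averages $\langle \cdot \rangle$ and $\langle \cdot \rangle_z$ without change. For the lower bound \eqref{eq:avgksquaredlowerwithu}, the plan is to rearrange the Gauss equation \autoref{eq:fullgauss3d} into the Hamiltonian-constraint form
\[
K^2 - K_U^2 - \frac{3}{2}\sigma^2 = -\frac{3}{2}\rthree + 24\pi G_N\, \rt_{\mu\nu} n^\mu n^\nu ,
\]
apply the WEC pointwise to discard the non-negative $\rt_{\mu\nu}n^\mu n^\nu$ term, and thereby reduce the task to showing $\langle \rthree \rangle, \langle \rthree \rangle_z \leq \frac{8\pi\chi_0}{S_{\min}(\lambda_1)}e^{-(\frac{2}{3}\kt^2 - \sqrt{2}K_{cor}^2)(\lambda_2-\lambda_1)}$. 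Via \autoref{thm:smingrowth} with $\delta = 1$ (the convention fixed after that theorem), this is equivalent to showing $\langle \rthree \rangle, \langle \rthree \rangle_z \leq \frac{4\pi\chi_0}{S_{\min}(\lambda_2)}$.

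To bound these averages of $\rthree$, I would substitute the Riccatti equation \autoref{eq:ricatti} into the 2D Gauss equation \autoref{eq:gauss2d} to obtain $\rthree = \rtwo - A_{\mu\nu}A^{\mu\nu} - H^2 - 2H'$. For the $z$-average, the transverse geodesic $\gamma$ is a closed loop, so $\int_0^{L(\lambda)} H'\,dz = [H]_0^{L(\lambda)} = 0$; dropping the non-negative $A_{\mu\nu}A^{\mu\nu}$ and $H^2$ terms then gives $\langle \rthree \rangle_z \leq \langle \rtwo \rangle_z$, and the Gauss--Bonnet bound \autoref{eq:r2gaussbonnet} finishes this side. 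For the volume-weighted average, I would integrate $\rthree$ against $dV = S(z)\,dz$ and use the first variation of area $S'(z) = S(z) H(z)$ together with $[SH]_0^{L(\lambda)}=0$ to rewrite $\int_0^{L(\lambda)} 2SH'\,dz = -2\int_0^{L(\lambda)} SH^2\,dz$; after collecting terms this yields
\[
\int_{\ml} \rthree\,dV = 4\pi\chi L(\lambda) + \int_0^{L(\lambda)} S\bigl(H^2 - A_{\mu\nu}A^{\mu\nu}\bigr)\,dz ,
\]
which I plan to control using the 2D Cauchy--Schwarz inequality $A_{\mu\nu}A^{\mu\nu}\geq \tfrac{1}{2}H^2$, the $G$-invariance of $A_{\mu\nu}$ on each orbit, and the exponential growth of $S_{\min}$ from \autoref{thm:smingrowth}. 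The secondary bound \eqref{eq:avgksquaredlower} then follows from \eqref{eq:avgksquaredlowerwithu} by adding the pointwise inequality $K_U^2 \geq \ko^2$ (so $\langle K_U^2 \rangle \geq \ko^2$) to both sides.

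The hard part will be controlling the residual integral $\int_0^{L(\lambda)} S(H^2 - A_{\mu\nu}A^{\mu\nu})\,dz$ that only appears in the volume-weighted case: on the $z$-average side the $H'$ term integrates to zero by closedness of $\gamma$ and the remaining corrections have a definite sign, but the 3D integration picks up a genuine $H$-dependent contribution that has to be absorbed using the warped-product symmetry and the minimal-area growth, rather than being eliminated by a boundary cancellation.
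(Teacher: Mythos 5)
Your handling of the upper bound \eqref{eq:avgksquaredupper} and of the $z$-averaged lower bound is exactly the paper's argument: the spatially constant right-hand side of \autoref{eq:ksquaredupperpointwise} descends to both averages, and for $\langle\cdot\rangle_z$ one substitutes the Riccati equation into \autoref{eq:gauss2d}, uses $\intmz H' = 0$ by closedness of the transverse geodesic, discards $A_{\mu\nu}A^{\mu\nu}+H^2\geq 0$ to get \autoref{eq:r2r3zdir}, and closes with \autoref{eq:r2pointwise}; the final step \eqref{eq:avgksquaredlower} via $K_U^2\geq\ko^2$ is also as in the paper.

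The volume-weighted half of \eqref{eq:avgksquaredlowerwithu} is where your proposal has a genuine gap. The paper dispatches it in one sentence (``the same proof works \ldots by replacing integrals above by integrals over $\ml$''), i.e.\ it implicitly treats the $H'$ term as still integrating to zero. Your integration by parts $\int_0^{L(\lambda)} H'S\,dz = -\int_0^{L(\lambda)} H^2 S\,dz$ (using $S'=HS$ and periodicity) is correct and shows this is not automatic: one is left with $\intmv \rthree = 4\pi\chi L(\lambda) + \int_0^{L(\lambda)} S\,(H^2 - A_{\mu\nu}A^{\mu\nu})\,dz$, and the residual equals $2\int_0^{L(\lambda)} S\det(A)\,dz$, which has no definite sign. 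Your proposed tool, $A_{\mu\nu}A^{\mu\nu}\geq\tfrac12 H^2$, runs the wrong way here: it gives $H^2-A_{\mu\nu}A^{\mu\nu}\leq\tfrac12 H^2$, an upper bound by a \emph{non-negative} quantity which, by the pointwise estimate of the type \eqref{eq:hpointwise} (whose derivation is independent of this lemma), is only $O(\kt^2-\ko^2)$ and not exponentially decaying. Following that route yields $\langle K^2-K_U^2-\tfrac32\sigma^2\rangle \geq -\tfrac{12\pi\chi_0}{S_{min}(\lambda_1)}e^{-(\frac23\kt^2-\sqrt2 K_{cor}^2)(\lambda_2-\lambda_1)} - O(\kt^2-\ko^2)$, which is strictly weaker than the stated bound and would degrade the exponent $\ko^2$ in the volume lower bound of \autoref{thm:vollengthgrowth}. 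So as written your proposal does not establish the volume-averaged inequality; neither the $G$-invariance of $A_{\mu\nu}$ nor the growth of $S_{\min}$ supplies the missing sign on $\int S\det(A)\,dz$. To your credit, your computation makes explicit a step that the paper's own proof does not justify.
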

\begin{proof}
The upper bounds are consequences of the pointwise bound in \autoref{eq:ksquaredupperpointwise}.
For the lower bound, integrating \autoref{eq:ricattiricci} from $z =0$ to $z = L(\lambda)$ and noting that $\intmz H' = 0$ since the endpoints refer to the same $z$-coordinate, we obtain
\begin{equation}\label{eq:r2r3zdir}
\intmz \rthree \leq \intmz \rtwo \ .
\end{equation}
Integrating \autoref{eq:gauss3d} from $z =0$ to $z = L(\lambda)$ and applying the above inequality,
\begin{equation}
\intmz \left(K^2 - K_U^2 - \frac{3}{2}\sigma^2\right) \geq -\frac{3}{2} \intmz \rtwo \ .
\end{equation}
Hence, by \autoref{eq:r2pointwise}, we obtain upon dividing by $L(\lambda)$,
\begin{equation}
\left\langle K^2 - K_U^2 - \frac{3}{2}\sigma^2 \right\rangle_z \geq -\frac{12\pi \chi_0}{S_{min}(\lambda_{1})}e^{-\left(\frac{2}{3}\kt^2 - \sqrt{2}K_{cor}^2\right)(\lambda_2 - \lambda_{1})} \ .
\end{equation}
The same proof works for $\left\langle K^2 - K_U^2 - \frac{3}{2}\sigma^2 \right\rangle$, by replacing integrals above by integrals over $\ml$.
\end{proof}

The next theorem describes the asymptotic growth of $V(\lambda)$ and $L(\lambda)$. In particular, $V(\lambda)$ and $L(\lambda)$ grow between what are expected of de Sitter spaces in the {flat slicing} with  cosmological constants $\Lambda_1, \Lambda_2$, with the latter $L(\lambda)$ having a correction term that approaches zero as $\frac{\Lambda_2}{\Lambda_1} \to 1$. 

\begin{theorem}\label{thm:vollengthgrowth}
Under the conditions of \autoref{thm:smingrowth}, for any $\delta > 0$, there exists $\lambda_{0, 2} \geq \lambda_{0, 1}$ such that for all $\lambda_2 \geq \lambda_1 \geq \lambda_{0, 2}$, 
\begin{equation}\label{eq:volumegrowth}
(1+\delta)^{-1}e^{\ko^2(\lambda_2 - \lambda_1)} \leq \frac{V(\lambda_2)}{V(\lambda_1)} \leq (1+\delta)e^{\kt^2(\lambda_2 - \lambda_1)} \ ,
\end{equation}
\begin{equation}\label{eq:lengthgrowth}
(1+\delta)^{-1}e^{\left(\frac{1}{3}\kt^2-K_{cor}^2\right) (\lambda_2 - \lambda_1)} \leq \frac{L(\lambda_2)}{L(\lambda_1)} \leq (1+\delta)e^{\left(\frac{1}{3}\kt^2+K_{cor}^2\right)(\lambda_2 - \lambda_1)} \ .
\end{equation}
\end{theorem}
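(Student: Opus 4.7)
The plan is to obtain differential inequalities for $\log V(\lambda)$ and $\log L(\lambda)$, bound their right-hand sides by a constant plus an $O(e^{-\varepsilon\lambda})$ error, and then invoke \autoref{lemma:exponenterror}. The required averages of $K^2$ and $\sigma^2$ are already packaged in \autoref{eq:ksquaredupperpointwise} and \autoref{lemma:avgksquared}. For $L$ one additionally has a cross term $\langle K\sigma_{\mu\nu}E^\mu E^\nu\rangle_z$ that is dealt with by the same \autoref{lemma:optimisation} mechanism used for $S_{\min}$ in \autoref{thm:smingrowth}, but now applied to averaged rather than pointwise quantities. The main technical hurdle is showing that $L(\lambda)$ varies under MCF as $\tfrac{dL}{d\lambda}=\int_0^{L(\lambda)}dz\,K K_{\mu\nu}E^\mu E^\nu$; this requires using the warped-product structure \autoref{eq:spatialmetric} and $G$-equivariance of MCF to reduce the length variation to the 1-dimensional transverse direction and then differentiate under the integral sign with $\tfrac{dg_{zz}}{d\lambda}=2KK_{zz}$ from \autoref{eq:metricevolution}.

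For the volume bound, \autoref{eq:metricevolution} gives $\tfrac{d}{d\lambda}\sqrt{g}=K^2\sqrt{g}$, hence $\tfrac{d}{d\lambda}\log V=\langle K^2\rangle$. The upper estimate $\langle K^2\rangle\le\kt^2(1+C_2 e^{-\frac{2}{3}\kt^2\lambda})$ is \autoref{eq:avgksquaredupper}, and the lower estimate $\langle K^2\rangle\ge\langle K^2-\tfrac{3}{2}\sigma^2\rangle\ge\ko^2-O(e^{-\varepsilon\lambda})$ uses $\sigma^2\ge 0$ combined with \autoref{eq:avgksquaredlower}. Both sides are of the form $C\pm O(e^{-\varepsilon\lambda})$ required by \autoref{lemma:exponenterror}, which with $C=\kt^2$ and $C=\ko^2$ respectively gives \autoref{eq:volumegrowth}.

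For $L(\lambda)$, the variation formula combined with the split $K_{\mu\nu}=\tfrac{1}{3}Kg_{\mu\nu}+\sigma_{\mu\nu}$ and $g_{\mu\nu}E^\mu E^\nu=1$ yields, after arc-length reparametrization,
\begin{equation*}
\tfrac{d}{d\lambda}\log L=\tfrac{1}{3}\langle K^2\rangle_z+\langle K\sigma_{\mu\nu}E^\mu E^\nu\rangle_z.
\end{equation*}
Since $\sigma_{\mu\nu}E^\mu E^\nu$ is a single diagonal entry of $\sigma$ in an orthonormal frame extending $E$, one has $|\sigma_{\mu\nu}E^\mu E^\nu|\le\sqrt{\sigma^2}$ pointwise, and Cauchy-Schwarz in the average gives $|\langle K\sigma_{\mu\nu}E^\mu E^\nu\rangle_z|\le\sqrt{\langle K^2\rangle_z\langle\sigma^2\rangle_z}$. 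Setting $X=\langle K^2\rangle_z$, $Y=\langle\sigma^2\rangle_z$ and $f=3\tfrac{d}{d\lambda}\log L$, this reads $|f-X|\le 3\sqrt{XY}$, while \autoref{lemma:avgksquared} furnishes $X-\tfrac{3}{2}Y\ge\ko^2-O(e^{-\varepsilon\lambda})$, $0\le X\le\kt^2+O(e^{-\varepsilon\lambda})$, $Y\ge 0$ -- precisely the hypotheses of \autoref{lemma:optimisation} with $x=\sqrt{X}$, $y=\sqrt{Y}$, $c=3$, $a=\ko$, $b=\kt$, and $\gamma,g=O(e^{-\varepsilon\lambda})$. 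The lemma produces $f=\kt^2\pm 3K_{cor}^2\pm O(e^{-\varepsilon\lambda})$, i.e.\ $\tfrac{d}{d\lambda}\log L=\tfrac{1}{3}\kt^2\pm K_{cor}^2\pm O(e^{-\varepsilon\lambda})$, and a final application of \autoref{lemma:exponenterror} delivers \autoref{eq:lengthgrowth}, with $\lambda_{0,2}\ge\lambda_{0,1}$ chosen large enough that the integrals of the error tails are bounded by $\ln(1+\delta)$.
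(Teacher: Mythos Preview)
Your proposal is correct and follows essentially the same approach as the paper: differentiate $\log V$ and $\log L$, feed in the average bounds from \autoref{lemma:avgksquared} and \autoref{eq:ksquaredupperpointwise}, handle the cross term $\langle K\sigma_{zz}\rangle_z$ via Cauchy--Schwarz and \autoref{lemma:optimisation} with $x=\langle K^2\rangle_z^{1/2}$, $y=\langle\sigma^2\rangle_z^{1/2}$, $c=3$, and then apply \autoref{lemma:exponenterror}. The paper simply writes $\tfrac{dL}{d\lambda}=\int_0^{L(\lambda)}dz\,KK_{zz}$ without further comment, so your remarks on deriving this from the warped-product structure and \autoref{eq:metricevolution} are a welcome elaboration rather than a departure.
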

\begin{remark}
If $\frac{\lt}{\lo} < \frac{6}{5}$, then $\frac{1}{3}\kt^2-K_{cor}^2 > 0$ so \autoref{eq:lengthgrowth} implies $L(\lambda)$ grows exponentially.
\end{remark}

\begin{proof}
From $\frac{dV}{d\lambda} = \intmv K^2$ such that $\frac{d}{d\lambda}\log V = \langle K^2 \rangle$, we obtain for $\lambda \geq \lambda_{0, 1}$,
\begin{equation}\label{eq:volumeperturb}
\ko^2 -\frac{12\pi \chi_0}{S_{min}(\lambda_{0, 1})}e^{-\left(\frac{2}{3}\kt^2 - \sqrt{2}K_{cor}^2\right)(\lambda - \lambda_{0, 1})} \leq \frac{d}{d\lambda} \log V  \leq \kt^2 \left(1 +C_2e^{-\frac{2}{3}\kt^2 \lambda} \right) \ , 
\end{equation}
due to \autoref{lemma:avgksquared}. Applying \autoref{lemma:exponenterror}, we obtain \autoref{eq:volumegrowth}. Next, we compute
\begin{equation}
\frac{dL}{d\lambda} = \intmz KK_{zz} = \intmz \left(\frac{K^2}{3} + K\sigma_{zz}\right) \ .
\end{equation}
By the triangle and Cauchy-Schwartz inequalities,
\begin{equation}
 \left|\intmz K \sigma_{zz} \right| \leq \left( \intmz K^2 \right)^{\frac{1}{2}} \left( \intmz \sigma^2 \right)^{\frac{1}{2}} \ .
\end{equation}
Expressing the two equations above in terms of averages along the $z$-direction, we have
\begin{equation}
\frac{d}{d \lambda} \log L = \left\langle \frac{K^2}{3} + K\sigma_{zz}\right\rangle_z \ ,
\end{equation}
\begin{equation}
\left|\left\langle K\sigma_{zz}\right\rangle_z\right| \leq \left\langle K^2 \right\rangle_z^{\frac{1}{2}} \left\langle \sigma^2 \right\rangle_z^{\frac{1}{2}} \ ,
\end{equation}
which imply
\begin{equation}\label{eq:lemma4logL}
\left|3\frac{d}{d\lambda}\log L - \left\langle K^2 \right\rangle_z \right| \leq 3 \left\langle K^2 \right\rangle_z^{\frac{1}{2}} \left\langle \sigma^2 \right\rangle_z^{\frac{1}{2}} \ .
\end{equation}
Furthermore from \autoref{lemma:avgksquared}, we have for $\lambda \geq \lambda_{0, 1}$ (where we have used the inequality $\sqrt{d+e} \leq \sqrt{d} + \sqrt{e}$ in the second equation below)
\begin{equation}
\left\langle K^2 \right\rangle_z - \frac{3}{2}\left\langle \sigma^2 \right\rangle_z \geq \ko^2 - \frac{12 \pi \chi_0}{S_{min}(\lambda_{0, 1})}e^{-\left(\frac{2}{3}\kt^2 - \sqrt{2}K_{cor}^2\right)(\lambda - \lambda_{0, 1})} \ ,
\end{equation}
\begin{equation}
\left\langle K^2 \right\rangle_z^{\frac{1}{2}} \leq \kt + \kt \sqrt{C_2} e^{-\frac{1}{3}\kt^2 \lambda} \ ,
\end{equation}
which correspond to the first two constraints in \autoref{lemma:optimisation} with $x = \left\langle K^2 \right\rangle_z^{\frac{1}{2}}$, $y = \left\langle \sigma^2 \right\rangle_z^{\frac{1}{2}}$, $c = 3$, $a = \ko$, $b = \kt$, $\gamma = \kt \sqrt{C_2} e^{-\frac{1}{3}\kt^2 \lambda}$, $C = \kt \sqrt{C_2}$, $g = \frac{12\pi \chi_0}{S_{min}(\lambda_{0, 1})}e^{-\left(\frac{2}{3}\kt^2 - \sqrt{2}K_{cor}^2\right)(\lambda - \lambda_{0, 1})}$. In light of \autoref{eq:lemma4logL}, using $f = 3\frac{d}{d\lambda}\log L $ in \autoref{lemma:optimisation} ,
\begin{align}\label{eq:lengthperturb}
\frac{\kt^2}{3} - K_{cor}^2  - O&\left(\max\left(e^{-\frac{1}{4}\left(\frac{2}{3}\kt^2 - \sqrt{2}K_{cor}^2\right)\lambda},e^{-\frac{1}{6}\kt^2 \lambda}\right) \right) \leq  \frac{d}{d\lambda} \log L \nonumber \\
&\leq \frac{\kt^2}{3} + K_{cor}^2 +  O\left( \max\left(e^{-\frac{1}{2}\left(\frac{2}{3}\kt^2 - \sqrt{2}K_{cor}^2\right)\lambda}, e^{-\frac{1}{6}\kt^2 \lambda}\right) \right) \ ,
\end{align}
such that \autoref{lemma:exponenterror} implies \autoref{eq:lengthgrowth}.
\end{proof}

\subsection{Flow Reset}
For a fixed $\delta > 0$ in \autoref{thm:vollengthgrowth}, let $\lambda_{0, 3} \coloneqq \max\left(\lambda_{0, 2}, \frac{3}{2\kt^2}\ln \frac{C_1}{\delta} \right)$ where the latter condition is such that $K_m(\lambda_{0,3}) \leq (1+\delta)\kt$ by \autoref{eq:kupperpointwise}. Now, for the ease of notation, let us re-define the initial time of the flow as to be $\lambda_{0,3}$, so from now on $\lambda_{0,3}=0$. Note that estimates \eqref{eq:kupperpointwise} and \eqref{eq:ksquaredupperpointwise} still hold (in fact, with $C_1 = \delta$ and $C_2 = \delta(2+\delta)$). \\

In particular, we have, for every $\lambda \geq 0$,
\begin{equation}\label{eq:redkupperpointwise}
K(x, \lambda) \leq \left(1 + \delta e^{-\frac{2}{3}\kt^2 \lambda}\right)\kt \leq (1+\delta) \kt \ , 
\end{equation}
\begin{equation}\label{eq:redminarea}
\frac{1}{2}e^{\left(\frac{2}{3}\ko^2 - \sqrt{2}K_{cor}^2\right)\lambda} \leq \frac{S_{min}(\lambda)}{S_{min}(0)} \leq 2 e^{\left(\frac{2}{3}\ko^2 + \sqrt{2}K_{cor}^2\right)\lambda} \ , 
\end{equation}
\begin{equation}\label{eq:redvolume}
(1+\delta)^{-1}e^{\ko^2 \lambda} \leq \frac{V(\lambda)}{V(0)} \leq (1+\delta) e^{\kt^2 \lambda} \ ,
\end{equation}
\begin{equation}\label{eq:redlength}
(1+\delta)^{-1} e^{\left(\frac{1}{3}\ko^2 - K_{cor}^2\right)\lambda} \leq \frac{L(\lambda)}{L(0)} \leq (1+\delta) e^{\left(\frac{1}{3}\ko^2 + K_{cor}^2\right)\lambda} \ ,
\end{equation}
\begin{equation}\label{eq:redrtwopointwise}
\rtwo(x, \lambda) \leq C_3\kt^2 e^{-\left(\frac{2}{3}\kt^2 - \sqrt{2}K_{cor}^2\right)\lambda} \ ,
\end{equation}
\begin{equation}\label{eq:redavgksquared}
\left\langle K^2 - K_U^2 - \frac{3}{2}\sigma^2 \right\rangle, \left\langle K^2 - K_U^2 - \frac{3}{2}\sigma^2 \right\rangle_z \geq - \frac{3}{2}C_3\kt^2 e^{-\left(\frac{2}{3}\kt^2 - \sqrt{2}K_{cor}^2\right)\lambda} \ ,
\end{equation}
where $C_3\kt^2 \coloneqq \frac{ 8 \pi \chi_0}{S_{min}(0)}$.

\section{MCF Probes Large Regions of Cosmology}\label{sec:probe}
The exponentially expanding volume of $M_{\lambda}$ renders $\lambda$ an ideal time coordinate for modelling inflation (when combined with the spatial coordinate $x$ on $M_0$). However, $\lambda$ can only be a meaningful coordinate if the slices $\{ M_{\lambda} \}_{\lambda \geq 0}$ probe a large subset of the cosmology. Physically, the main possible pathology is the flow slowing down to a halt, as it encounter black holes in the cosmology (such that $K$ approaches zero), in an effort to avoid them. This is not a defect of MCF per say but, by design, a mechanism that enables us to study smooth regions of the cosmology (expected to dominate on the large-scale), without having to explicitly handle singularities. That said, we do have to check that the flow does not trivially stop everywhere and is able to reach large regions of the cosmology.

In the context of a positive cosmological constant $\Lambda$, \cite{creminelli2020sitter} achieved this by proving that $\{M_{\lambda} \}_{\lambda \geq 0}$ is in fact a foliation of the future $M^{(3+1)} \cap \{ t \geq 0 \}$. Inadvertently, this implies that the cosmology does not have any crushing singularities and is thus geodesically complete (since the only spacetime singularities are assumed to be of the crushing kind). Their result thus precludes black holes expected to be physically relevant.

For a more general inflationary potential $\lo \leq U \leq \lt$, we hereby prove weaker statements which show that $\{ M_{\lambda} \}_{\lambda \geq 0}$ probes a large subset of the future. Concretely, we show that almost all of the total physical volume on $M_{\lambda}$ has height $u = t|_{\ml}$ that approaches $+\infty$ as $\lambda \to + \infty$ and, in fact the heights of these points grow at least as fast as $\frac{\ko^2 }{\kt} \lambda$. Though we do not know if black holes indeed form, our results do not explicitly preclude black holes in the cosmology.




Below, we assume without loss of generality that $u_{min}(0) \coloneqq \inf_{x \in M_0}u(x, 0) = 0$ (otherwise one can redefine such a $t$ by subtracting a constant). Furthermore, we use the notation $V(A, \lambda) = \intmv \mathbbm{1}_{A}$ for any subset $A \subset \ml$.

\begin{theorem}\label{thm:smallheightvolume}
For any $\lambda \geq 0$ and $C(\lambda)$, 
\begin{equation}
V(\{u \leq C(\lambda) \}, \lambda) \leq V(0) e^{(1+\delta) C(\lambda) \kt} \ .
\end{equation}
As a result, for every $\varepsilon > 0$,
\begin{equation}\label{eq:heightlineargrowth}
\lim_{\lambda \to \infty} \frac{V\left(\left\{u \leq (1 + \varepsilon)^{-1} \frac{\ko^2}{\kt} \lambda \right\}, \lambda \right) }{V(\lambda)} = 0 \ ,
\end{equation}
which shows that almost all of the physical volume of $M_{\lambda}$ has height growing at least as fast as $\frac{\ko^2 }{\kt}\lambda$ asymptotically.

\end{theorem}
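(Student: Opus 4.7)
The plan is to exploit two ingredients: the monotonicity of the height $u$ along the MCF, and the explicit volume-form evolution, linked via the uniform upper bound on $K$ from \autoref{eq:redkupperpointwise}. First, differentiating $u(x,\lambda) = t(y(x,\lambda))$ along the flow and using \autoref{eq:mcfequation} gives $\frac{du}{d\lambda} = K\, n^\mu \nabla_\mu t$. Since $n^\mu$ is future-directed unit timelike and $\nabla t$ is past-directed timelike with $\nabla_\mu t \nabla^\mu t = -N^{-2}$, the reverse Cauchy--Schwartz inequality applied to the future-directed pair $(n, -\nabla t)$ yields $n^\mu \nabla_\mu t \geq N^{-1}$. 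By \autoref{thm:avoidsingular} the flow stays in $\{\tilde t \leq c_1\}$ where $N=1$ by our gauge choice, so $\frac{du}{d\lambda} \geq K \geq 0$. In particular $u$ is non-decreasing along flow lines, and $u(x,\lambda) \geq u(x,0) \geq 0$ by the normalization $u_{\min}(0)=0$.

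Next, \autoref{eq:metricevolution} gives $\frac{d}{d\lambda}\sqrt{g}(x,\lambda) = \tfrac{1}{2}\sqrt{g}\, g^{ij}\frac{d g_{ij}}{d\lambda} = K^2 \sqrt{g}$, hence $\sqrt{g}(x,\lambda) = \sqrt{g}(x,0)\exp\!\bigl(\int_0^\lambda K^2(x,\lambda')\,d\lambda'\bigr)$. The key trick is to convert the $K^2$-integral (controlling volume growth) into a $K$-integral (controlled by $u$ via the first step). Using the pointwise bound $K \leq (1+\delta)\kt$ from \autoref{eq:redkupperpointwise}, we have $K^2 \leq (1+\delta)\kt\, K$, and combined with $\int_0^\lambda K\,d\lambda' \leq \int_0^\lambda \frac{du}{d\lambda'}\,d\lambda' = u(x,\lambda)-u(x,0) \leq u(x,\lambda)$, this yields the pointwise bound $\int_0^\lambda K^2\,d\lambda' \leq (1+\delta)\kt\, u(x,\lambda)$.

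Therefore, on the subset $\{x \in M_0 : u(x,\lambda) \leq C(\lambda)\}$ we have $\sqrt{g}(x,\lambda) \leq \sqrt{g}(x,0)\, e^{(1+\delta)\kt\, C(\lambda)}$, and integrating this over $M_0$ gives the first claim $V(\{u \leq C(\lambda)\},\lambda) \leq V(0)\, e^{(1+\delta)\kt\, C(\lambda)}$. For \autoref{eq:heightlineargrowth}, substituting $C(\lambda) = (1+\varepsilon)^{-1}\frac{\ko^2}{\kt}\lambda$ and dividing by the lower bound $V(\lambda) \geq V(0)(1+\delta)^{-1} e^{\ko^2 \lambda}$ from \autoref{eq:redvolume} produces a ratio bounded by $(1+\delta)^2 \exp\bigl(\ko^2 \lambda\, [(1+\delta)/(1+\varepsilon) - 1]\bigr)$, which tends to zero provided the flow reset is performed with any $\delta \in (0,\varepsilon)$, which is always permissible since $\delta$ was free in Section~4.3. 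The only conceptually non-trivial step is the inequality $K^2 \leq (1+\delta)\kt\, K$, which turns the quadratic quantity governing volume into a linear one governed by $u$; beyond this, the argument is essentially a Gr\"onwall-style bookkeeping and carries no serious obstacle.
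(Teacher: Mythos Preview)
Your proposal is correct and follows essentially the same approach as the paper: both use the reverse Cauchy--Schwartz inequality to obtain $\partial_\lambda u \geq K$, convert $\int_0^\lambda K^2\,d\lambda'$ into $(1+\delta)\kt \int_0^\lambda K\,d\lambda' \leq (1+\delta)\kt\, u(x,\lambda)$ via the pointwise bound \autoref{eq:redkupperpointwise}, and then integrate the resulting bound on $\sqrt{g}$; the limit statement is obtained in both cases by choosing $\delta$ small relative to $\varepsilon$ (the paper substitutes $C(\lambda)=(1+2\delta)^{-1}\frac{\ko^2}{\kt}\lambda$ and then takes $\delta$ arbitrary, which is equivalent to your formulation).
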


\begin{proof}
Firstly, observe that
\begin{equation}\label{eq:heightincrease}
\partial_{\lambda} u(x, \lambda) = g^{(4)}\left(\nabla t, K n \right) \geq K |\nabla t| |n| = K \ ,
\end{equation}
where we have applied the reverse Cauchy-Schwartz inequality for timelike vectors in the second step (note that $g^{(4)}\left(\nabla t, n \right) > 0$ since $\nabla t$ is past-directed while $n$ is future-directed). Hence, for $(x, \lambda)$ such that $u(x, \lambda) \leq C(\lambda)$, \autoref{eq:heightincrease} implies 
\begin{equation}
\int_0^{\lambda} d \lambda' K(x, \lambda') \leq C(\lambda) - u(x, 0) \leq C(\lambda) - u_{min}(0) = C(\lambda) \ ,
\end{equation}
such that since $K \leq \left(1+\delta \right) \kt$ for $\lambda \geq 0$, we have
\begin{equation}
\int_0^{\lambda} d \lambda' K^2(x, \lambda') \leq (1+\delta) \kt C(\lambda) \ ,
\end{equation}
and from the evolution equation $\frac{\partial}{\partial\lambda}\log \sqrt{g} = K^2$,
\begin{equation}
\sqrt{g(x, \lambda)} = \sqrt{g(x, 0)}e^{\int_0^{\lambda} d\lambda K^2(x, \lambda)} \leq \sqrt{g(x, 0)} e^{(1+\delta) \kt C(\lambda)} 
\end{equation}
at these points. Then,
\begin{align}
V(\{u \leq C(\lambda) \}, \lambda) &\leq \int_{M^{(3)}} \sqrt{g(x, 0)}e^{ (1+\delta) \kt C(\lambda) } d^3x \mathbbm{1}_{u(x, \lambda) \leq C}  \nonumber \\
&\leq e^{ (1+\delta) \kt C (\lambda)}  \int_{M^{(3)}} \sqrt{g(x, 0)} d^3x \nonumber \\
&=  V(0)e^{(1+\delta) \kt C(\lambda)} \ .
\end{align}
Substituting $C(\lambda) = (1 + 2 \delta)^{-1} \frac{\ko^2}{\kt} \lambda $ and recalling that $V(\lambda) = \Omega\left(e^{-\ko^2 \lambda} \right)$ in light of the lower bound in \autoref{eq:volumegrowth}, 
\begin{equation}
\lim_{\lambda \to \infty} \frac{V\left(\left\{u \leq (1 + 2\delta)^{-1} \frac{\ko^2}{\kt} \lambda \right\}, \lambda \right) }{V(\lambda)} = 0 \ .
\end{equation}
Finally, since $\delta$ is arbitrary, we obtain \autoref{eq:heightlineargrowth}. 
\end{proof}
We also have the following bounds on the max and average heights $u_{max}(\lambda) \coloneqq \max_{x \in M_0}u(x, \lambda)$ and $\langle u \rangle $ on $\ml$. 
\begin{corollary}
\begin{equation}\label{eq:uavgliminflimsup}
\frac{\ko^2}{\kt} \leq \liminf_{\lambda \to +\infty} \frac{\langle u \rangle (\lambda)}{\lambda} \leq \limsup_{\lambda \to +\infty} \frac{\langle u \rangle (\lambda)}{\lambda} \leq \limsup_{\lambda \to +\infty} \frac{u_{max} (\lambda)}{\lambda} \leq \kt \ .
\end{equation}
\end{corollary}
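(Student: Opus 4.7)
The chain contains four inequalities, two of which are immediate: $\liminf \leq \limsup$ is tautological, and $\langle u\rangle(\lambda) \leq u_{max}(\lambda)$ is just the pointwise bound $u(x,\lambda) \leq u_{max}(\lambda)$ averaged over $\ml$. The content therefore lies in the two outer estimates.

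For the lower bound $\liminf \langle u\rangle/\lambda \geq \ko^2/\kt$, I would fix $\varepsilon \in (0,1)$, choose $\delta$ in the flow reset small enough that $(1+\delta)(1-\varepsilon) < 1$, and set $C(\lambda) = (1-\varepsilon)\,\ko^2 \lambda/\kt$. Since $u_{min}(0) = 0$ and $u(x,\lambda)$ is non-decreasing along the flow (from $\partial_\lambda u \geq K > 0$, as used in the proof of \autoref{thm:smallheightvolume}), one has $u \geq 0$ pointwise on $\ml$, so
$$\langle u\rangle(\lambda) \;\geq\; C(\lambda)\,\frac{V(\lambda) - V(\{u \leq C(\lambda)\},\lambda)}{V(\lambda)}\ .$$
Feeding in $V(\{u \leq C(\lambda)\},\lambda) \leq V(0)\,e^{(1+\delta)C(\lambda)\kt}$ from \autoref{thm:smallheightvolume} and $V(\lambda) \geq (1+\delta)^{-1}V(0)\,e^{\ko^2\lambda}$ from \autoref{eq:redvolume}, the bad-volume fraction decays exponentially in $\lambda$ because $(1+\delta)(1-\varepsilon) < 1$. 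Dividing by $\lambda$ and taking $\liminf$ yields $\liminf \langle u\rangle/\lambda \geq (1-\varepsilon)\ko^2/\kt$; sending $\varepsilon \to 0$ concludes.

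For the upper bound $\limsup u_{max}/\lambda \leq \kt$, I would apply Hamilton's trick (\autoref{lemma:hamilton}, in its maximum version) to $u_{max}(\lambda) = \max_{x \in M_0} u(x,\lambda)$. At a point of differentiability, picking $x^*$ that achieves the maximum, one has $u_{max}'(\lambda) = \partial_\lambda u(x^*,\lambda) = K(x^*,\lambda)\,g^{(4)}(n,\nabla t)\bigl|_{y(x^*,\lambda)}$ from the MCF equation \autoref{eq:mcfequation} and the chain rule. The key geometric observation is that every $y(x,\lambda) \in \ml$ satisfies $t(y(x,\lambda)) \leq u_{max}(\lambda)$ with equality at $x^*$, so $\ml$ touches the Cauchy level set $\{t = u_{max}(\lambda)\}$ tangentially from below at $y(x^*,\lambda)$. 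Since both surfaces are smooth and spacelike, their future-directed unit normals must coincide there; using $|\nabla t| = N^{-1} = 1$ in the region $\{\tilde t \leq c_1\}$, this forces $n = -\nabla t$ at that point, so $g^{(4)}(n,\nabla t) = 1$ and $u_{max}'(\lambda) = K(x^*,\lambda) \leq K_m(\lambda)$. Integrating the pointwise bound in \autoref{eq:kupperpointwise} yields $u_{max}(\lambda) \leq u_{max}(0) + \kt\lambda + O(1)$, and hence $\limsup u_{max}/\lambda \leq \kt$.

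The principal subtlety is the tangency/normal-matching step of the last paragraph: the naive inequality $\partial_\lambda u \geq K$ coming from reverse Cauchy–Schwarz for timelike vectors only produces lower bounds on $u$, and one must exploit the extremum to upgrade it to the equality $\partial_\lambda u = K$ at the maximizer. Everything else is a clean averaging argument built on \autoref{thm:smallheightvolume} together with integration of the mean-curvature bound.
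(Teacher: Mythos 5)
Your proposal is correct and follows essentially the same route as the paper: the lower bound via the volume-splitting argument built on Theorem \ref{thm:smallheightvolume} together with the volume lower bound, and the upper bound via Hamilton's trick with $\nabla t$ parallel to $n$ at the maximizer (where the unit lapse makes $g^{(4)}(\nabla t, n)=1$), followed by integrating the bound on $K$. The only cosmetic difference is that the paper invokes the already-established limit \eqref{eq:heightlineargrowth} directly rather than re-deriving the exponential decay of the bad-volume fraction, and it states the normal-matching at the maximizer without the tangency justification you supply.
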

\begin{proof}
The only non-trivial inequalities are the left-most and the right-most ones. For the former, since $u$ is increasing in $\lambda$ due to \autoref{eq:heightincrease}, $u(x, \lambda) \geq u_{min}(0) = 0$ for any $x \in M_0$ and $\lambda \geq 0$. Notably, for any $\lambda > 0$ and $\varepsilon > 0$,
\begin{align}
\frac{\langle u \rangle (\lambda)}{\lambda} &\geq \frac{1}{\lambda V(\lambda)}  \Bigg[ (1 + \varepsilon)^{-1} \frac{\ko^2}{\kt}\lambda \cdot V\left(\left\{u  > (1 + \varepsilon)^{-1}  \frac{\ko^2}{\kt}\lambda \right\}, \lambda \right) \nonumber \\
&+ u_{min}(0) \cdot  V\left(\left\{u  \leq  (1 + \varepsilon)^{-1}  \frac{\ko^2}{\kt}\lambda \right\}, \lambda  \right) \Bigg] \nonumber \\
&= (1+ \varepsilon)^{-1} \frac{\ko^2}{\kt} \frac{V\left(\left\{u  > (1 + \varepsilon)^{-1}  \frac{\ko^2}{\kt}\lambda \right\}, \lambda \right) }{V(\lambda) } \nonumber \\
&=  (1+ \varepsilon)^{-1} \frac{\ko^2}{\kt}  \left(1 - \frac{V\left(\left\{u  \leq (1 + \varepsilon)^{-1}  \frac{\ko^2}{\kt}\lambda \right\}, \lambda \right) }{V(\lambda) } \right) \ .
\end{align}
Taking $\liminf_{\lambda \to + \infty}$ and applying \autoref{eq:heightlineargrowth} yields for all $\varepsilon > 0$,
\begin{equation}
\liminf_{\lambda \to + \infty} \frac{\langle u \rangle (\lambda)}{\lambda}  \geq (1 + \varepsilon)^{-1} \frac{\ko^2}{\kt} \ .
\end{equation}
Finally, taking $\varepsilon \to 0$ gives the left-most inequality. For the right-most inequality, \autoref{lemma:hamilton} implies
\begin{equation}\label{eq:umaxderivative}
\partial_{\lambda}u_{max} = g^{(4)}\left(\nabla t, K n \right) = K \leq (1+\delta)\kt \ ,
\end{equation}
where $\nabla t$ and $n$ in the second expression are parallel at the maximizer of $u$ (so $g^{(4)}\left(\nabla t, n \right) = 1$). Integrating \autoref{eq:umaxderivative}  yields $\limsup_{\lambda \to + \infty} \frac{u_{max}(\lambda)}{\lambda} \leq (1+ \delta) \kt$ after which taking $\delta \to 0$ yields the right-most inequality.
\end{proof}

\section{Auxiliary Results}\label{sec:auxiliary}
Having established that $\lambda$ is a useful time coordinate which asymptotically describes the future of the cosmology, we explore a few ramifications of using such a coordinate system.
\subsection{Variation of Metric in $z$-direction}\label{sec:metricz}
Observe that a priori, there might be large regions where $U = \lo$ and $U = \lt$ respectively. In these regions, the de Sitter no-hair theorem in \cite{creminelli2020sitter} suggests (but does not prove) that the metrics in these cases approach those of de Sitter with cosmological constants $\lo$ and $\lt$ respectively. Then, the spatial metric $g$ of $\ml$ could vary significantly across these regions as $\lambda$ grows. In this section, for large $\lambda$, we show that $g$ varies much more weakly along the $z$-direction than the extreme case described. In particular, our results suggest that such hypothetical de Sitter domains (if they exist) must be separated by a physical distance that tends to infinity as $\lambda \to \infty$, illustrating that a physical observer cannot experience two different cosmological constants at late enough times.

The propagation of the metric along the level sets is given by the second fundamental form (extrinsic curvature):
\begin{equation}\label{eq:propmetricz}
\lie_{\partial_z} g_{\mu \nu} = 2A_{\mu \nu} \ ,
\end{equation}
so we proceed by controlling $A_{\mu \nu}$. Firstly, \autoref{eq:gauss3d} implies the following pointwise bound on $\rthree$,
\begin{equation}\label{eq:rthreepointwise}
\rthree \geq \frac{2}{3}\left(\kt^2 - K^2\right) - \frac{2}{3}\left(\kt^2 - \ko^2\right) \geq -\frac{2}{3}C_2 \kt^2 e^{-\frac{2}{3}\kt^2 \lambda} - \frac{2}{3}\left(\kt^2 - \ko^2\right) \ ,
\end{equation}
where we have applied \autoref{eq:ksquaredupperpointwise} in the second step. Substituting the pointwise bounds for $\rtwo$ and $\rthree$ given by Eqs. \eqref{eq:redrtwopointwise} and \eqref{eq:rthreepointwise} into \autoref{eq:ricattiricci},
\begin{align}\label{eq:ricattiricciafter}
H^2 + 2H' + A_{\mu \nu} A^{\mu \nu} &= \rtwo - \rthree \nonumber \\
&\leq \frac{2}{3}\left(\kt^2 - \ko^2\right) + \left(C_3 + \frac{2}{3}C_2\right)\kt^2 e^{-\left(\frac{2}{3}\kt^2 - \sqrt{2}K_{cor}^2\right)\lambda} \ .
\end{align}
Using $A_{\mu \nu}A^{\mu \nu} \geq \frac{1}{2}H^2$ by the Cauchy-Schwartz inequality and noting that $H' = 0$ and the maximum and minimum of $H$, we obtain the following pointwise bound
\begin{align}\label{eq:hpointwise}
|H| &\leq \sqrt{\frac{4}{9}\left(\kt^2 - \ko^2 \right) + \frac{2}{3}\left(C_3 + \frac{2}{3}C_2\right)\kt^2 e^{-\left(\frac{2}{3}\kt^2 - \sqrt{2}K_{cor}^2\right)\lambda}} \nonumber   \\
&\leq \frac{2}{3}\sqrt{\kt^2 - \ko^2 } + \sqrt{ \frac{2}{3}\left(C_3 + \frac{2}{3}C_2\right) } \kt e^{-\frac{1}{2}\left(\frac{2}{3}\kt^2 - \sqrt{2}K_{cor}^2\right)\lambda} \nonumber \\
&\coloneqq \frac{2}{3}\sqrt{\kt^2 - \ko^2 }  + \varepsilon_{\lambda} \ .
\end{align}
Integrating \autoref{eq:ricattiricciafter} along the z-direction and using the pointwise bound \autoref{eq:hpointwise}, 
\begin{equation}
\int_0^z dz A_{\mu \nu}A^{\mu \nu} \leq \frac{2}{3}\left(\kt^2 - \ko^2\right)z + \frac{8}{3}\sqrt{\kt^2 - \ko^2 } + \frac{3}{2}\varepsilon_{\lambda}^2z  + 4\varepsilon_{\lambda} \ .
\end{equation}
Using the Cauchy-Schwartz equality and the inequality $\sqrt{d+e} \leq \sqrt{d} + \sqrt{e}$,
\begin{equation}\label{eq:aintegralbound}
\int_0^z dz |A| \leq \sqrt{ \frac{2}{3} \left[\left(\sqrt{\kt^2 - \ko^2 }z + 2 \right)^2 -4\right]} + \sqrt{\frac{3}{2}\varepsilon_{\lambda}^2 z^2 + 4\varepsilon_{\lambda}z} \ .
\end{equation}
Equipped with this bound, we proceed with bounding the variation in the metric. For any product coordinate system on $\ml$ of the form $(\alpha, \beta, z)$ where $z$ is as before and $\partial_{\alpha}, \partial_{\beta}$ are tangent to each surface orbit, \autoref{eq:propmetricz} implies
\begin{equation}
|\partial_z g_{\alpha \alpha}| = |2A_{\alpha \alpha}| \leq 2|A| g_{\alpha \alpha} \ .
\end{equation}

Integrating this from $z=z_1$ to $z=z_2$ and using \autoref{eq:aintegralbound}, with $\Delta z \coloneqq |z_2 - z_1|$, 
\begin{equation}\label{eq:logmetriczdiff}
\left| \log \frac{g_{\alpha \alpha}(z_2, \lambda)}{g_{\alpha \alpha}(z_1, \lambda)}\right| \leq   2\sqrt{ \frac{2}{3} \left[\left(\sqrt{\kt^2 - \ko^2 }\Delta z + 2 \right)^2 -4\right]} + 2\sqrt{\frac{3}{2}\varepsilon_{\lambda}^2 \left(\Delta z \right)^2 + 4\varepsilon_{\lambda}\Delta z} \ .
\end{equation}
One consequence of the above is as follows. Suppose there are two points $x_1, x_2 \in M_0$ such that under MCF, $g_{\alpha \alpha}$ at these points expand as in de Sitter with different cosmological constants $\Lambda_3 \neq \Lambda_4$ in the FLRW slicing with $\lo \leq \Lambda_3, \Lambda_4 \leq \lt$. Now, let $z_1(\lambda), z_2(\lambda)$ denote the respective $z$-coordinates of $x_1, x_2$ on $\ml$ (identified via MCF). Then since $g_{\alpha\alpha}(z_1(\lambda), \lambda) = \Theta\left(e^{\frac{1}{3}K_{\Lambda_3}^2 \lambda}\right)$and  $g_{\alpha\alpha}(z_2(\lambda), \lambda) = \Theta\left(e^{\frac{1}{3}K_{\Lambda_4}^2 \lambda}\right)$, \\ $\left| \log \frac{g_{\alpha \alpha}(z_2(\lambda), \lambda)}{g_{\alpha \alpha}(z_1(\lambda), \lambda)}\right| = \Theta \left( \left|K_{\Lambda_3}^2 - K_{\Lambda_4}^2\right| \lambda \right)$ which implies $\Delta z(\lambda) = \Omega \left( \frac{\left|K_{\Lambda_3}^2 - K_{\Lambda_4}^2\right|}{\sqrt{\kt^2 - \ko^2}}\lambda\right) $ by \autoref{eq:logmetriczdiff}. Notably, $\Delta z(\lambda) \to \infty$ as $\lambda \to \infty$, showing that these points must have a diverging physical separation in $z$.
\subsection{$L^1$ Dilution of Reduced Stress-Energy Tensor}\label{sec:l1stressenergy}
Next, we show that as a consequence of inflationary expansion, matter partially dilutes and the reduced stress-energy tensor is at most $\lt - \lo$ almost everywhere. We achieve this by bounding the average of $|\rt_{\mu\nu}n^\mu n^\nu| = \rt_{\mu\nu}n^\mu n^\nu$ (by the WEC) on a transverse geodesic. Note that
\begin{align}
 \frac{16\pi G_N}{L(\lambda)} &\intmz  \rt_{\mu\nu}n^\mu n^\nu \nonumber \\
&= \frac{1}{L(\lambda)}\intmz \left(\rthree + \frac{2}{3}\left(K^2 - K_U^2\right) - \sigma^2\right) \tag{\autoref{eq:fullgauss3d}}  \\
&\leq  \frac{1}{L(\lambda)}\intmz \left(\rtwo + \frac{2}{3}\left(K^2 - \kt^2\right) + \frac{2}{3}\left(\kt^2 - \ko^2 \right) \right) \tag{\autoref{eq:r2r3zdir}}  \\
&\leq \frac{2}{3}\left(\kt^2 - \ko^2 \right)  + \left(\frac{2}{3}C_2 + C_3 \right)\kt^2 e^{- \left(\frac{2}{3}\kt^2 - \sqrt{2} K_{cor}^2 \right)\lambda} \ ,
\end{align}
where we have applied Eqns. \eqref{eq:redrtwopointwise} and \eqref{eq:ksquaredupperpointwise} in the last inequality. Because of the DEC, $\rt_{\mu\nu} n^\mu n^\nu$ is at least as large as the absolute value of any other component of the stress tensor in an orthonormal frame where $n^\mu$ is the timelike vector\footnote{This is actually an equivalent definition of the DEC \cite{Hawking:1973uf} as it is straightforward to verify.}. We therefore define a vierbein $e_\mu{}^{a}$, such that $g_{\mu\nu}^{(4)}=e_\mu{}^{a}e_\nu{}^{b}\eta_{ab}$, with $\eta_{ab}$ being the Minkowski metric. We choose $e_\mu{}^0=n_\mu$. By the DEC, we have
\begin{align}
\frac{16\pi G_N}{L(\lambda)}&\intmz \left| \rt_{\mu\nu} e^{\mu a}e^{\nu b}\right|\\
&\leq \frac{16\pi G_N}{L(\lambda)} \intmz \rt_{\mu\nu}n^\mu n^\nu \nonumber \\
&\leq \frac{2}{3}\left(\kt^2 - \ko^2 \right) + \left(\frac{2}{3}C_2 + C_3 \right)\kt^2 e^{- \left(\frac{2}{3}\kt^2 - \sqrt{2} K_{cor}^2 \right)\lambda} \ .
\end{align}
Since, by the symmetries of the problem, $\rt_{\mu\nu}$ is uniform on the slices at constant $z$, we see that in almost-all of the ever-growing $z$-direction, $ G_N \rt_{\mu\nu}$ has to be at most on the order of $\frac{\kt^2 -\ko^2}{24 \pi G_N} = \lt - \lo$ up to a correction of order $ \left(\frac{2}{3}C_2 + C_3 \right)\lt$ on a z-distance that is at most an $O\left(e^{-\left(\frac{2}{3}\kt^2 - \sqrt{2}K_{cor}^2\right)\lambda}\right)$ fraction of the total length $L(\lambda)$. Physically, this suggests that for small enough $\frac{\Lambda_2 - \Lambda_1}{\Lambda_1} \ll 1$, the influences of matter and the inflaton kinetic energy are bounded and the potential component is the dominant factor in determining the asymptotic geometry.

\subsection{Asymptotic Average Rolling of Scalar Field in Monotonic Potential}\label{sec:rolling}
In this section, we devote our attention to the stress-energy tensor of an inflaton given by \autoref{eq:inflatonstressenergy}. Under the additional assumption that the potential $U$ is increasing (i.e. $U'(\phi) > 0$), we show that asymptotically, the average rolling of $\phi$ in MCF coordinates resembles that of a slow-rolling solution in an expanding FLRW universe. Given a point $p \in \ml$, the metric of $M^{(3+1)}$ at $p$ is given by
\begin{equation}\label{eq:metricmcfcoordinates}
ds^2 = g_{\mu \nu}^{(4)}dx^{\mu}dx^{\nu} = - K^2 d\lambda^2 + g_{ij}dx^i dx^j \ .
\end{equation}
This motivates our definition of the average rolling of $\phi$ on $\ml$ as $\langle K^{-1} \partial_{\lambda} \phi \rangle(\lambda) \coloneqq \frac{1}{V(\lambda)} \intmv K^{-1}\partial_{\lambda}\phi $. Now, the dynamics of $\phi$ is governed by the Klein-Gordon equation:
\begin{equation}\label{eq:kleingordonraw}
\Box \phi = U'(\phi) \ , 
\end{equation}
where the Laplace-Beltrami operator $\Box$ on $M^{(3+1)}$ is defined by
\begin{equation}\label{eq:laplacebeltrami}
\Box \phi = \frac{1}{\sqrt{-g^{(4)}}}\partial_{\mu}\left(\sqrt{-g^{(4)}} \partial^{\mu}\phi\right) = \frac{1}{K\sqrt{g}} \partial_{\mu}\left( K \sqrt{g} \partial^{\mu}\phi\right) \ .
\end{equation}
Here, we recall that $g$ is the spatial metric on $\ml$. Below, we will adopt the additional assumption that $U'(\phi) > 0$ (i.e. the potential is increasing). An example potential is depicted in Fig. \ref{examplepotential}.

\begin{figure}[!t]
\centering
\includegraphics[scale=0.5]{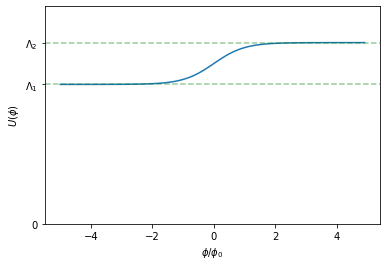}
\caption{Example of an increasing potential $\lo \leq U \leq \lt$ given by $U(\phi) = \frac{\lt - \lo}{2} \tanh\left(\frac{\phi}{\phi_0}\right) + \frac{\lt + \lo}{2}$ where $\phi_0 > 0$ is some inflaton scale.}
\label{examplepotential}
\end{figure}

\begin{theorem}\label{thm:slowrollasymptotic}
Let $\phi$ evolve via the Klein-Gordon equation \eqref{eq:kleingordonraw} with $U'(\phi) > 0$. Furthermore, suppose the conditions of \autoref{thm:smingrowth} hold. Then, 
\begin{equation}
\limsup_{\lambda \to \infty} \langle K^{-1}\parl \phi \rangle \leq 0 \ , 
\end{equation}
\begin{equation}
\liminf_{\lambda \to \infty} \langle K^{-1}\parl \phi \rangle \geq - \frac{\kt }{\ko^2} \limsup_{\lambda \to \infty} \left \langle U' \right \rangle \ .
\end{equation}
In particular, if $U'(\phi) \leq M$ is bounded, we have
\begin{equation}
\liminf_{\lambda \to \infty} \langle K^{-1}\parl \phi \rangle \geq -\frac{\kt M}{\ko^2} \ .
\end{equation}
\end{theorem}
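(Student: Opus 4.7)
The plan is to extract a first-order ODE for $I(\lambda)\coloneqq \int_{M_\lambda}K^{-1}\partial_\lambda\phi\, dV_\lambda = V(\lambda)\langle K^{-1}\partial_\lambda\phi\rangle$ by integrating the Klein--Gordon equation \autoref{eq:kleingordonraw} against the MCF volume element, and then to read off both bounds from that ODE using the exponential volume and $K$ estimates of \autoref{sec:asymptotics_geometrical}. In MCF coordinates $(x,\lambda)$ the spacetime metric is as in \autoref{eq:metricmcfcoordinates}, so $\sqrt{-g^{(4)}}=K\sqrt{g}$ and $g^{(4)\lambda\lambda}=-K^{-2}$. Multiplying $\Box\phi=U'(\phi)$ by $K\sqrt{g}$ and using \autoref{eq:laplacebeltrami} yields
\begin{equation}
-\partial_\lambda\!\left(\tfrac{\sqrt{g}}{K}\partial_\lambda\phi\right) + \partial_i\!\left(K\sqrt{g}\, g^{ij}\partial_j\phi\right) = K\sqrt{g}\, U'(\phi).
\end{equation}
Integrating in $x$ over the compact slice $M_0$ annihilates the spatial-divergence term and leaves the key identity
\begin{equation}\label{eq:plan_identity}
I'(\lambda) = -\int_{M_\lambda} K\, U'(\phi)\, dV_\lambda.
\end{equation}

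For the upper bound, $K>0$ and $U'(\phi)>0$ together with \autoref{eq:plan_identity} force $I'(\lambda)\le 0$, hence $I(\lambda)\le I(0)$. The volume lower bound \autoref{eq:redvolume} gives $V(\lambda)\to\infty$ at least as $e^{\ko^2\lambda}$, so $\langle K^{-1}\partial_\lambda\phi\rangle = I(\lambda)/V(\lambda)\le I(0)/V(\lambda)\to 0$, establishing $\limsup_{\lambda\to\infty}\langle K^{-1}\partial_\lambda\phi\rangle\le 0$.

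For the lower bound, set $A\coloneqq \limsup_{\lambda\to\infty}\langle U'\rangle$ and fix $\epsilon>0$. I would choose $\lambda_0$ so that $\langle U'\rangle(\lambda')\le A+\epsilon$ for $\lambda'\ge\lambda_0$, and combine the pointwise bound $K\le(1+\delta)\kt$ from \autoref{eq:redkupperpointwise} with the rearranged volume comparison $V(\lambda')\le(1+\delta)V(\lambda)e^{-\ko^2(\lambda-\lambda')}$ (a direct consequence of \autoref{eq:redvolume}) to estimate
\begin{equation}
\int_{\lambda_0}^{\lambda}\! d\lambda' \!\int_{M_{\lambda'}} K\,U'\, dV_{\lambda'} \le (1+\delta)^2\,\kt(A+\epsilon)\!\int_{\lambda_0}^\lambda V(\lambda')\,d\lambda' \le \frac{(1+\delta)^2 \kt(A+\epsilon)}{\ko^2}\,V(\lambda),
\end{equation}
while the contribution on $[0,\lambda_0]$ is a $\lambda$-independent constant. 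Dividing the integrated form $-I(\lambda)=-I(0)+\int_0^\lambda\!\int KU'\,dV_{\lambda'}\,d\lambda'$ of \autoref{eq:plan_identity} by $V(\lambda)$ and sending $\lambda\to\infty$, then $\delta,\epsilon\to 0$, yields $\liminf_{\lambda\to\infty} I(\lambda)/V(\lambda)\ge -(\kt/\ko^2)A$, the claimed bound; the bounded-potential corollary follows since $U'\le M$ forces $A\le M$. The only substantive step is the derivation of \autoref{eq:plan_identity}, which rests on the clean MCF-coordinate form of the metric and compactness of the spatial slices; the rest is a short Gr\"onwall-type manipulation using estimates already in hand, so I anticipate no serious obstacle.
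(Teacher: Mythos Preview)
Your derivation of the key identity $I'(\lambda)=-\int_{M_\lambda}KU'\,dV_\lambda$ is exactly what the paper does, and from there both arguments are correct but proceed differently. The paper rewrites this as an ODE for the \emph{average}, $\tfrac{d}{d\lambda}\langle K^{-1}\partial_\lambda\phi\rangle=-\langle KU'\rangle-\langle K^2\rangle\langle K^{-1}\partial_\lambda\phi\rangle$, and then runs a case split (either the average is $\le 0$ at some time and a barrier argument keeps it there, or it is always positive and a logarithmic differential inequality forces exponential decay to $0$); the lower bound is obtained by bounding $\langle K^2\rangle$ from below and observing that $\liminf$ of the derivative cannot be positive. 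You instead stay with the unnormalized $I$: the upper bound follows in one line from monotonicity of $I$ and $V\to\infty$, with no case split, and the lower bound comes from integrating the identity and controlling $\int V(\lambda')\,d\lambda'$ via the exponential volume growth. Your route is a bit more elementary and sidesteps the ODE barrier and $\liminf$-of-derivative arguments; the paper's route has the minor advantage that the $\lambda$-decaying factor in $K\le\kt(1+\delta e^{-\frac{2}{3}\kt^2\lambda})$ disappears automatically in the limit, whereas you must let $\delta\to 0$ at the end (which is legitimate since the asymptotic quantities are insensitive to the reset time $\lambda_{0,3}(\delta)$). Either way the proof goes through.
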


\begin{proof}
Substituting \autoref{eq:laplacebeltrami} in \autoref{eq:kleingordonraw}, we obtain upon multiplying by $-K$ and integrating over $\ml$:
\begin{align}\label{eq:integratedkg}
-\intmv KU' &= -\int_{M^{(3)}} d^3 x \partial_{\mu}\left(K \sqrt{g} \partial^{\mu}\phi\right) \nonumber \\
&= -\frac{d}{d\lambda} \int_{M^{(3)}} d^3 x  K \sqrt{g} \partial^{\lambda} \phi \nonumber \\
&= \frac{d}{d\lambda} \intmv K^{-1} \partial_{\lambda} \phi \ ,
\end{align}
where in the second step, we have used Stokes' theorem along with the fact that $M^{(3)}$ has no boundary to conclude that the spatial derivatives vanish upon integration. Next, we compute
\begin{align}\label{eq:avgrollevolution}
\frac{d}{d\lambda}\langle K^{-1}\parl \phi \rangle  &= \frac{1}{V}\frac{d}{d\lambda} \intmv K^{-1} \partial_{\lambda} \phi - \frac{1}{V}\frac{dV}{d\lambda} \langle K^{-1}\parl \phi \rangle  \nonumber \\
&= - \left\langle K U' \right\rangle  -  \langle K^2 \rangle \langle K^{-1}\parl \phi \rangle \ ,
\end{align}
where we have applied \autoref{eq:integratedkg} and the equation $\frac{dV}{d\lambda} = \intmv K^2$ in the second step. Now since $- \langle K U' \rangle < 0$, by a standard ODE argument, if $\langle K^{-1}\parl \phi \rangle(\lambda^*) \leq 0$ for some $\lambda^*$, $\langle K^{-1}\parl \phi \rangle(\lambda) \leq 0$ for all $\lambda \geq \lambda^*$.

Concretely, suppose otherwise that $\langle K^{-1}\parl \phi \rangle(\lambda_0) > 0$ for some $\lambda_0 \geq \lambda^*$. Then, define $\bar{\lambda} = \inf_{\lambda \geq \lambda^*} \{\lambda:  \langle K^{-1}\parl \phi \rangle (\lambda)  > 0 \}$ such that by continuity, $\langle K^{-1}\parl \phi  \rangle(\bar{\lambda}) = 0$. Then, Eq. \eqref{eq:avgrollevolution} implies that $\frac{d}{d\lambda}\langle K^{-1}\parl \phi \rangle (\bar{\lambda}) = - \langle K U ' \rangle < 0$. Then, the continuity of the R.H.S of Eq. \eqref{eq:avgrollevolution} in $\lambda$ implies that there exists $\delta > 0$ such that $\frac{d}{d\lambda}\langle K^{-1}\parl \phi \rangle ({\lambda}) < 0$ on $(\bar{\lambda} - 2 \delta, \bar{\lambda} + 2 \delta )$ such that for $\lambda \in [\bar{\lambda}, \bar{\lambda} + \delta]$, $\langle K^{-1}\parl \phi \rangle(\lambda) = \langle K^{-1}\parl \phi \rangle (\bar{\lambda}) + \int_{\bar{\lambda}}^{\lambda} d\lambda' \frac{d}{d\lambda}\langle K^{-1}\parl \phi \rangle (\lambda') \leq \langle K^{-1}\parl \phi \rangle (\bar{\lambda})  = 0$. However, this contradicts the infimum property of $\bar{\lambda}$ since there does not exist arbitrarily close $\lambda > \bar{\lambda}$ to $\bar{\lambda}$ such that $\langle K^{-1}\parl \phi \rangle (\lambda) > 0$. Thus, $\langle K^{-1}\parl \phi \rangle(\lambda) \leq 0$ for all $\lambda \geq \lambda^*$.

In this case, we automatically have $\limsup_{\lambda \to \infty} \langle K^{-1}\parl \phi \rangle \leq 0$. Next, by Eqs. \eqref{eq:redkupperpointwise} and \eqref{eq:redavgksquared}, we have the lower bound
\begin{align}\label{eq:avgrolllower}
\frac{d}{d\lambda}\langle K^{-1}\parl \phi \rangle \geq &-  \kt \left(1 + \delta e^{-\frac{2}{3}\kt^2 \lambda}\right)\left\langle U' \right\rangle  \nonumber \\ 
&- \left[ \ko^2 - \frac{3}{2}C_3 \kt^2 e^{-\left(\frac{2}{3}\kt^2 - \sqrt{2}K_{cor}^2\right)}\right]\langle K^{-1}\parl \phi \rangle \ .
\end{align}
Now, observe that $\liminf_{\lambda \to \infty} \frac{d}{d\lambda}\langle K^{-1}\parl \phi \rangle \leq 0$ since $\liminf_{\lambda \to \infty} \frac{d}{d\lambda} \langle K^{-1}\parl \phi \rangle > 0$ would result in the contradiction $\langle K^{-1}\parl \phi \rangle(\lambda) > 0$ for large enough $\lambda$. Thus, taking the $\liminf$ of \autoref{eq:avgrolllower} yields
\begin{equation}
\liminf_{\lambda \to \infty} \langle K^{-1} \parl \phi \rangle \geq - \frac{\kt }{\ko^2} \limsup_{\lambda \to \infty} \left \langle U' \right \rangle \ .
\end{equation}
The only other possible case is $\langle K^{-1}\parl \phi \rangle(\lambda) > 0$ for all $\lambda$, for which we immediately have $\liminf_{\lambda \to \infty} \langle K^{-1}\parl \phi \rangle(\lambda) \geq 0$. Now, for large enough $\lambda$, the R.H.S. of \autoref{eq:redavgksquared} is larger than $-\frac{\ko^2}{2}$. Then, \autoref{eq:avgrollevolution} becomes for large enough $\lambda$
\begin{equation}
\frac{d}{d\lambda}\langle K^{-1}\parl \phi \rangle \leq - \frac{\ko^2}{2} \langle K^{-1}\parl \phi \rangle  \ ,
\end{equation}
or in other words,
\begin{equation}
\frac{d}{d\lambda} \log \langle K^{-1}\parl \phi \rangle \leq - \frac{\ko^2}{2} \ ,
\end{equation}
which shows that $\limsup_{\lambda \to \infty} \langle K^{-1}\parl \phi \rangle \leq 0 $. In this second case, we actually have $\lim_{\lambda \to \infty} \langle K^{-1}\parl \phi \rangle  = 0 $.
\end{proof}

\begin{remark}
Consider the standard slow-roll model with a FLRW metric
\begin{equation}\label{eq:flrwmetric}
ds_{FLRW}^2 = - dt^2 + a(t)^2 d\boldsymbol{\Sigma}^2 \ ,
\end{equation}
where $d\boldsymbol{\Sigma}^2$ is a spatial metric, independent of $t$, on a three-manifold of uniform curvature---that is, the space is either elliptical (closed), euclidean (flat), or hyperbolic (open). Under the traditional slow-roll assumptions at late enough times and assuming $a(t) \to \infty$ as $t \to \infty$, we have $H^2 \coloneqq \left(\frac{\dot{a}}{a}\right)^2 \approx \frac{8\pi G}{3}U = \frac{K_U^2}{9}$ while $\dot{\phi} \approx - \frac{U'}{3H} \approx -\frac{U'}{K_U}$. Meanwhile, since $K_U \leq \kt$, we can rewrite \autoref{thm:slowrollasymptotic} as $\limsup_{\lambda \to \infty} \langle K^{-1}\parl \phi \rangle \leq 0$ and $\liminf_{\lambda \to \infty} \langle K^{-1}\parl \phi \rangle \geq - \frac{\kt^2}{\ko^2} \limsup_{\lambda \to \infty} \left \langle \frac{U'}{K_U} \right \rangle$. Hence, the asymptotic bounds in \autoref{thm:slowrollasymptotic} are reminiscent of a slow-roll inflaton in an expanding FLRW universe, up to an additional factor of $\frac{\kt^2}{\ko^2} = \frac{\lt}{\lo}$.

\end{remark}

\section{Conclusion and Future Work}\label{sec:conclusion}
In summary, we have considered 3+1 dimensional cosmologies satisfying the Einstein equations with an inflationary potential $U$ satisfying $0 < \Lambda_1 \leq U \leq \Lambda_2$ with $\frac{\Lambda_2}{\Lambda_1} < \frac{3}{2}$ and matter satisfying the dominant and the strong energy conditions. In particular, our results allow for a non-flat inflaton potential as opposed to only a cosmological constant assumed by previous work.

We have assumed that the  only potential singularities are of the crushing kind, and that the spatial slices are foliated by homogeneous but {potentially} anisotropic 2-surfaces. Though we assumed that the initial spatial slice $M_0$ has positive mean curvature $K$ everywhere, we do not require homogeneous initial conditions on $M_0$.

By probing the geometry with mean curvature flow, we show that the family of mean curvature flow surfaces $\{ M_{\lambda} \}_{\lambda \geq 0}$ has physical volume expanding in $\lambda$, at exponential rates between those of flat slicings of de Sitter spaces with cosmological constants $\Lambda_1$ and $\Lambda_2$. In particular, this result shows that inflationary expansion can indeed occur without homogeneous initial conditions and partially resolves the ``initial patch problem''. 

\textit{Future Directions: }In the limiting case of a positive cosmological constant, \cite{creminelli2020sitter} proved a de Sitter no-hair theorem where the metric converges pointwise to that of de Sitter space and the asymptotic geometry becomes physically indistinguishable from de Sitter space. Though our current bounds are too weak (in particular, the lower bound in \autoref{eq:avgksquaredlower} involves the average rather than the pointwise $K$), our partial results in \autoref{sec:auxiliary} are highly suggestive that in the particular case of a dynamical inflaton with the stress-energy tensor in \autoref{eq:inflatonstressenergy}, a similar convergence to de Sitter space may occur, given a more refined analysis of the inflaton dynamics or slightly stronger assumptions on the cosmology. Concretely, Sections \ref{sec:metricz} and \ref{sec:l1stressenergy} show that the influence of the inflaton on the inhomogeneity of the geometry is bounded on ever-expanding regions. At the same time, \autoref{sec:rolling} shows that, in some averaged sense, the inflaton is asymptotically rolling down the potential $U$ (or at least not rolling upward) so at late enough times, one expects the inflaton to converge to a constant $\Lambda_1$. Then, we might be able to recover a de Sitter no-hair theorem for a cosmological constant $\Lambda_1$.

\section*{Acknowledgments} We thank Paolo Creminelli for initial extensive collaboration in this project. We also thank Or Hershkovits and Andras Vasy for discussions.



\bibliographystyle{JHEP}

\bibliography{biblio.bib}   

\end{document}